\title{Notes on $k\aexppol$ problems\\ for deterministic machines} 
\titlerunning{Notes on $k\aexppol$ problems for deterministic machines} 
\author{Alessio Mansutti}{Department of Computer Science, University of Oxford, Oxford, UK}{alessio.mansutti@cs.ox.ac.uk}{https://orcid.org/0000-0002-1104-7299}{}
\authorrunning{A.~Mansutti} 
\keywords{Elementary hierarchy, deterministic and alternating Turing machines}
\begin{document}

\maketitle

\begin{abstract}
    The complexity of several logics, such as Presburger arithmetic, dependence logics and ambient logics, can only be characterised in terms of alternating Turing machines. Despite quite natural, the presence of alternation can sometimes cause neat ideas to be obfuscated inside heavy technical machinery. In these notes, we propose two problems on deterministic machines that can be used to prove lower bounds with respect to the computational class $k\aexppol$, i.e.~the class of all problems solvable by an alternating Turing machine running in $k$ exponential time and performing a polynomial amount of alternations, with respect to the input size. The first problem, called \emph{$k\aexppol$-prenex \TM problem}, is a problem about deterministic Turing machines. The second problem, called the \emph{$k$-exp alternating multi-tiling problem}, is analogous to the first one, but on tiling systems.

Both problems are natural extensions of the \emph{TM alternation problem} and the \emph{alternating multi-tiling problem} proved $\aexppol$-complete by L.~Bozzelli, A.~Molinari, A.~Montanari and A.~Peron in~\cite{Bozzelli17}. The proofs presented in these notes follow the elegant exposition in A.~Molinari's PhD thesis~\cite{Molinari19} to extend these results from the case $k = 1$ to the case of arbitrary~$k$.

\end{abstract}

\section{Alternation problems for deterministic Turing machines}
\label{sec:alternationproblems}

In this section, we introduce the \defstyle{$k\aexppol$-prenex \TM problem}, a decision problem for deterministic multi-tape Turing machines,
which we prove being $k\aexppol$-complete in~\Cref{appendix:ATMs} (see below for the definition of this complexity class). 
The $k\aexppol$-prenex problem is a straightforward generalisation of the \defstyle{TM alternation problem} shown $\aexppol$-complete in~\cite{Bozzelli17} and in~\cite[page~292]{Molinari19}, from which our (self-contained) proofs are based upon.

\subparagraph*{Notation for regular languages.} 
Let~$A$ and $B$ be two regular languages.
As usual, we write $A \cup B$ and $A \setminus B$ for the set theoretical union and difference of languges, respectively.
With~$A \cdot B$ we denote the language obtained by concatenating words in~$A$ with words in~$B$.
Then,~$A^0 = \{\epsilon\}$ and for all $n \in \Nat$, $A^{n+1} = A^n \cdot A$. Lastly, $A^* = \bigcup_{n \in \Nat} A^n$ and  $A^+ \egdef A^* \setminus A^0$.
Given a finite~\emph{alphabet} $\Alphabet$, 
the set $\Alphabet^{n}$ is the set of words over~$\Alphabet$ of length~$n \in \Nat$ and,
given a finite word~$\aword$, we write $\card{\aword}$ for its length.

\subparagraph*{Complexity classes.}
We recall some standard complexity classes that appear throughout these notes. 
Below and across the whole paper, given natural numbers $k,n \geq 1$, we write $\tetra$ for the tetration function inductively defined as $\tetra(0,n) \egdef n$ and ${\tetra(k, n) = 2^{\tetra(k-1,n)}}$. Intuitively, $\tetra(k,n)$ defines a tower of exponentials of height~$k$.
\begin{itemize}[nosep]
  \item $k\nexptime$ is the class of all problems 
decidable with a non-deterministic Turing machine
running in time~$\tetra(k,f(n))$ for some polynomial~$f : \Nat \to \Nat$, on each input of length $n$.
  \item $\Sigma^{k\exptime}_j$ is the class of all problems decidable
  with an alternating Turing machine (\ATM,~\cite{ChandraKS81}) in time $\tetra(k,f(n))$, starting on an existential state and performing at most $j-1$ alternations, for some polynomial $f : \Nat \to \Nat$, on each input of length~$n$. By definition, $\Sigma^{k\exptime}_1 = k\nexptime$.
  \item $k\aexppol$ is the class of all problems decidable with an \ATM running
  in time~$\tetra(k,f(n))$ and performing at most $g(n)$ number of alternations, for some polynomials~$f,g$, on each input of length~$n$. The inclusion $\Sigma^{k\exptime}_j \subseteq k\aexppol$ holds for every $j \in \PNat$.
\end{itemize}

\subparagraph*{Deterministic $n$-tapes TM.}
Let $n \in \PNat$.
Following the presentation in~\cite{Molinari19}, 
we define the class of \textit{deterministic $n$-tapes Turing machines} (\nTM).
A~\nTM is 
a deterministic machine $\Turing = (n,\Alphabet,\States,\qinit,\qacc,\qrej,\trans)$,
where $\Sigma$ is a finite~alphabet containing a \emph{blank symbol}~$\blanksymb$, and 
$\States$~is a finite set of \emph{states} including the initial state~$\qinit$, the accepting state~$\qacc$ and the rejecting state~$\qrej$.
The \nTM operates on $n$ distinct \emph{tapes} numbered from $1$ to $n$, and it has one \emph{read/write head} shared by all tapes. Every \emph{cell} of each tape contains a symbol from $\Alphabet$ and is indexed with a number from $\Nat$ (i.e.~the tapes are infinite only in one direction).
Lastly, the deterministic transition function~${\trans\colon \States \times \Sigma \to (\States \times \Sigma \times \{-1,+1\}) \cup (\States \times [1,n])}$ is such that for every $\asymbol \in \Alphabet$, $\trans(\qacc,a) \egdef (\qacc,1)$ and $\trans(\qrej,a) \egdef (\qrej,1)$.

A configuration of~$\Turing$ is given by the content of the $n$ tapes together with a \emph{positional state}~$(\astate,j,k) \in \States \times [1,n] \times \Nat$.
The content of the tape is represented by an $n$-uple of finite words $(\aword_1,\dots,\aword_n) \in (\Alphabet^*)^n$, where each 
$\aword_i$ specifies the content of the first $\card{\aword_i}$ cells of the $i$-th tape, after which the tape only contains the blank symbols~$\blanksymb$.
The positional state~$(\astate,j,k)$
describes the current state~$\astate$ of the machine together with the position~$(j,k)$ of the read/write head, placed on the $k$-th cell of the $j$-th tape.
At each step, given the positional state~$(\astate,j,k)$ of $\Turing$ and the symbol~$\asymbol \in \Alphabet$ read by the read/write head in position~$(j,k)$, one of the following occurs:
\begin{itemize}[leftmargin=*,align=left]
    \item[ \textit{case $\trans(\astate,\asymbol) = (\astate',\asymbolbis,i)$, where $\astate \in \States$, $\asymbolbis \in \Alphabet$ and $i \in \{-1,1\}$ (ordinary moves)}:] 
    If $k + i \in \Nat$,
    then $\Turing$ overwrites the symbol $\asymbol$ in position $(j,k)$ with the symbol $\asymbolbis$. Afterwards~$\Turing$ changes its positional state to $(\astate',j,k+i)$. 
    Otherwise ($k + i \notin \Nat$, i.e.~$k = 0$ and $i = -1$),
    $\Turing$ does not modify the tapes and changes its positional state to $(\qrej,1,0)$.
    \item[\textit{case $\trans(\astate,\asymbol) = (\astate',j')$, where $\astate' \!\in\, \States$ and $j' \!\in\, [1,n]$ (jump moves)}:] 
    The read/write head moves to the $k$-th cell of the $j'$-th tape, i.e.~$\Turing$ changes its positional state to $(\astate',j',k)$.
    Jump moves do not modify the content of the $n$ tapes.
\end{itemize}

Initially, each tape contains a word in $\Sigma^*$ written from left to right starting from the position~$0$ of the tape.
Hence, an input of~$\Turing$ can be described as a tuple $(\aword_1,\dots,\aword_n) \in (\Sigma^*)^n$ where, for all $j \in [1,n]$, 
$\aworld_j$ is the input word for the $j$-th tape.
We write $\Turing(\aworld_1,\dots,\aworld_n)$ for the run of the \nTM on input $(\aworld_1,\dots,\aworld_n) \in (\Sigma^*)^n$, 
starting from the positional state $(\qinit,1,0)$. 
The run~$\Turing(\aworld_1,\dots,\aworld_n)$~\emph{accepts} in time~$t \in \Nat$ if $\Turing$ reaches the accepting state~$\qacc$ in at most~$t$ steps.

\begin{definition}[$k\aexppol$-prenex \TM problem]
    \label{definition:KEXP-alt-problem}
    Fix~$k \in \PNat$.

    \begin{tabular}{rl}
    {\rm\textbf{Input:}}&$(n,\vec{Q},\Turing)$, where~${n \in \PNat}$ is written in unary, ${\vec{Q} = (Q_1,\dots,Q_n) \in \{\exists,\forall\}^n}$ \\ &with $Q_1 = \exists$, and a $\Turing$ is a \nTM~working on alphabet~$\Alphabet$.\\
    {\rm\textbf{Question:}}&is it true that\\
        &$Q_1 \aworld_1 \in \Sigma^{\tetra(k,n)}, \dots, Q_n \aworld_n \in \Sigma^{\tetra(k,n)}$ : 
        $\Turing(\aworld_1,\dots,\aworld_n)$ accepts in time
        $\tetra(k,n)$ ?
    \end{tabular}
    
\end{definition}

We analyse the complexity of the $k\aexppol$-prenex \TM problem depending on the type of \emph{quantifier prefix}~$\vec{Q}$. For arbitrary quantifier prefixes, 
the problem is $k\aexppol$-complete.
The membership in~$k\aexppol$ is straightforward, whereas 
the hardness follows by reduction from the 
acceptance problem for alternating Turing machines running in  $k$-exponential time and performing a polynomial number of alternations, as we show in~\Cref{appendix:ATMs}.

\begin{restatable}{theorem}{TheoremKAEXPAltProblem}
    \label{theorem:kaexp-alternation-complexity}
    The $k\aexppol$-prenex \TM problem is $k\aexppol$-complete.
\end{restatable}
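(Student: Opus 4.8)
\emph{Membership.} The upper bound is routine. On input $(n,\vec{Q},\Turing)$ an \ATM first writes candidate words $\aword_1,\dots,\aword_n\in\Alphabet^{\tetra(k,n)}$ on a dedicated tape, one after another, being in an existential (resp.\ universal) control state while producing $\aword_i$ according to whether $Q_i$ is $\exists$ (resp.\ $\forall$); it then deterministically simulates $\Turing(\aword_1,\dots,\aword_n)$ for $\tetra(k,n)$ steps and accepts iff $\qacc$ is reached. This takes $O(n\cdot\tetra(k,n))\le\tetra(k,O(n))$ steps, and the number of alternations is $\card{\{i<n: Q_i\neq Q_{i+1}\}}<n$; since $n$ is written in unary, both are bounded by polynomials in the input length, so the problem lies in $k\aexppol$.

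\emph{Hardness --- the reduction.} For the lower bound I would show that every $L\in k\aexppol$ reduces in polynomial time to the $k\aexppol$-prenex \TM problem. Let $M$ be an \ATM deciding $L$ that runs in time $\tetra(k,p(m))$ with at most $q(m)$ alternations on inputs of length $m$, for fixed polynomials $p,q$, let $x$ with $\card{x}=m$ be a given input, and set $T\egdef\tetra(k,p(m))$. After a trivial normalisation --- prepend a fresh existential state with a unique successor, and turn $M$ into an equivalent single-tape machine of out-degree at most $2$, at the price of polynomially inflating $p$ --- I may assume $M$'s initial configuration on $x$ is existential, so that along every computation path the modes of the visited configurations form a word of $\exists^{+}\forall^{+}\exists^{+}\cdots$ with at most $q(m)+1$ blocks, and $M$ halts within $T$ steps. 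The guiding idea is that the $i$-th quantified word should carry exactly the nondeterministic choices $M$ makes during the $i$-th block. So the reduction outputs $(n,\vec{Q},\Turing)$, where $n$ is a polynomial in $m$, written in unary, large enough that (i)~$n\ge q(m)+1+C$ with $C$ the constant number of work tapes used below, and (ii)~$\tetra(k,n)\ge T^{2}$ --- possible because $\tetra(k,\ell)^{2}\le\tetra(k,\ell+1)$ for $k\ge 2$ and $\tetra(1,\ell)^{2}=\tetra(1,2\ell)$; $\vec{Q}$ is $\exists,\forall,\exists,\forall,\dots$ on its first $n-C$ coordinates (so $Q_1=\exists$) and arbitrary on the last $C$ ``scratch'' coordinates; and $\Turing$ is the \nTM that uses its last $C$ tapes as work tapes on which it \emph{only ever writes} --- so their initial contents, and hence the quantifiers attached to them, are immaterial --- and, consuming choice bits from the other tapes, simulates $M$ on the hard-coded input $x$ while keeping a block index $i$ (initially $1$): when the simulated configuration of $M$ has two successors it reads the next bit of $\aword_i$ to select one, when the mode of that configuration changes it advances $i$, and it halts in $\qacc$ (resp.\ $\qrej$) precisely when $M$ reaches its accepting (resp.\ rejecting) state. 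By the normalisation, $\Turing$ uses $\le C+q(m)+1\le n$ tapes, reads from each $\aword_i$ only a prefix of length $\le T=\card{\aword_i}$, and --- after an $O(\tetra(k,p(m)))$-step preprocessing that, by recursion on $k$, sets up counters of sizes up to $\tetra(1,p(m)),\dots,\tetra(k,p(m))$ on the work tapes (the base case counting to $p(m)$ inside the $\mathrm{poly}(m)$-size finite control of $\Turing$) and clears the cells it will use --- runs in $O(T^{2})\le\tetra(k,n)$ steps. Since $n$ is polynomial and unary and the description of $\Turing$ has size $O(m)$, the reduction is polynomial-time.

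\emph{Correctness and the main obstacle.} It then remains to check that $M$ accepts $x$ iff the prenex sentence of \Cref{definition:KEXP-alt-problem} holds, which I would do by unwinding both sides into the same two-player game on $M$'s computation tree: the existentially quantified words $\aword_1,\aword_3,\dots$ encode a strategy for the existential player --- the relevant prefix of $\aword_i$ being its prescribed block-$i$ moves given the words fixed so far --- the universally quantified $\aword_2,\aword_4,\dots$ range over all opposing plays, each completed play is re-executed faithfully by the deterministic simulation inside $\Turing$ and ends in $\qacc$ exactly when the corresponding branch of $M$ is accepting, and all words past the blocks actually entered (the scratch coordinates included) are inert. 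The conceptual heart --- and the step I would be most careful with --- is that, because the mode of a configuration of $M$ is fixed by its state, the \emph{deterministic} machine $\Turing$ can itself see where each block ends, so $M$'s alternations are carried entirely by the quantifier prefix $\vec{Q}$ and none of the ``a malformed certificate supplied by the $\forall$-player is vacuously harmless'' bookkeeping that the tiling variant needs is required here. The remaining ingredients --- the normalisation of $M$, the recursion-on-$k$ construction of the counters and the clearing of the work tapes, the head management for the jump moves that switch between coordinates, and the tetration arithmetic behind the size and time bounds with $n=\mathrm{poly}(m)$ --- are routine.
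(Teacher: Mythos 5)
Your proposal is sound in outline, and the membership argument is the same routine one the paper gestures at, but your hardness proof takes a genuinely different route from the paper's. The paper (via \Cref{lemma:almost-theorem-kaexp-alternation}) lets each quantified word $\aword_i$ spell out an entire \emph{hop} of the \ATM, i.e.\ a concatenation of full configurations of length $h(\card{\aword})$ each, so the deterministic \nTM never simulates $\ATuring$ at all: it only checks chunk-by-chunk membership in $C$ and local consistency $c(\widetilde{\aword}_{j+1})\in\Delta(c(\widetilde{\aword}_j))$, rejecting malformed existential words and accepting malformed universal ones; the payoff is a verifier running in time \emph{polynomial in the length of its inputs} (this is exactly what the pseudo-oracle, iterated-logarithm and integer-square-root machinery of \Cref{lemma:tetra-property-1,lemma:tetra-property-2,lemma:tetra-property-3,lemma:isqrt} is for), after which the theorem follows by the padding argument $n=p(2m)$. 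You instead make each word carry only the nondeterministic choice bits of the $i$-th alternation block and have the deterministic machine re-run $M$ on the hard-coded input, paying about $T^2$ steps that you absorb by padding $n$ so that $\tetra(k,n)\ge T^2$. This trades away the paper's malformed-certificate bookkeeping and the poly-in-input-length requirement on the verifier (so you need none of the $\klog{k}$-style tricks), at the cost of a $k$-exponential-time simulation inside the \nTM; the paper's packaging, by contrast, yields a reusable lemma whose properties~I--III also drive \Cref{theorem:kaexp-j-prenex-complexity} and the tiling results. A few of your side claims need tightening but are not gaps: building the $k$-fold exponential yardsticks costs more than $O(\tetra(k,p(m)))$ steps (still within budget after padding, and arguably unnecessary, since $M$ halts on its own within $T$ steps and an end-marker discipline replaces cell-clearing); the description of $\Turing$ is $\mathrm{poly}(m)$ rather than $O(m)$, as it hard-codes $x$, $M$'s transition table and $O(n)$ tape-index bookkeeping in the finite control; since a step always reads before it writes, ``only ever writes on the scratch tapes'' should be ``behaves uniformly on whatever it reads there before overwriting it'', which is precisely the property the paper isolates as~II(b) of its lemma; and ``no malformed certificates'' presupposes fixing a total decoding of arbitrary alphabet symbols into choices (e.g.\ a default successor). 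Finally, note that the single shared head with index-preserving jumps makes each bit fetch cost $O(T)$ head moves, which your $O(T^2)$ accounting does already cover.
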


For bounded alternation,
let $\altern{\vec{Q}}$ be the number of alternations between existential and universal quantifiers in~$\vec{Q} \in \{\exists,\forall\}^n$, plus one. That is, $\altern{\vec{Q}} \egdef 1 + \card{\{i \in [2,n] : Q_i \neq Q_{i-1}\}}$. For $j \in \PNat$, the $\Sigma^{k\exptime}_j$-prenex \TM problem is the $k\aexppol$-prenex \TM problem restricted to instances~$(n,\vec{Q},\Turing)$ with $\altern{\vec{Q}} = j$.
Refining~\Cref{theorem:kaexp-alternation-complexity}, this problem is~$\Sigma^{k\exptime}_j$-complete.

\begin{restatable}{theorem}{CorollarySIGMAKProblem}
    \label{theorem:kaexp-j-prenex-complexity}
    The $\Sigma^{k\exptime}_j$-prenex \TM problem is $\Sigma^{k\exptime}_j$-complete.
\end{restatable}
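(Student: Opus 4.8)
The plan is to derive \Cref{theorem:kaexp-j-prenex-complexity} as a refinement of \Cref{theorem:kaexp-alternation-complexity}, checking that the membership argument and the hardness reduction both respect the quantifier-alternation count~$\altern{\vec{Q}}$. For membership, I would revisit the straightforward $k\aexppol$ upper bound for the $k\aexppol$-prenex \TM problem: an \ATM that, on input $(n,\vec{Q},\Turing)$, existentially/universally guesses the words $\aworld_1,\dots,\aworld_n \in \Sigma^{\tetra(k,n)}$ according to $\vec{Q}$ and then deterministically simulates $\Turing$ for $\tetra(k,n)$ steps. The number of alternations this \ATM performs equals exactly the number of quantifier switches in $\vec{Q}$, i.e.~$\altern{\vec{Q}}-1$, and since $Q_1 = \exists$ the machine starts on an existential state; hence on instances with $\altern{\vec{Q}} = j$ it runs in time $\tetra(k,\poly(n))$ with at most $j-1$ alternations starting existentially, which is precisely $\Sigma^{k\exptime}_j$. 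One routine point to address is that guessing a word of length $\tetra(k,n)$ costs $\tetra(k,n)$ steps within a single (existential or universal) phase, so it does not introduce extra alternations.

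For hardness, I would take the reduction of \Cref{appendix:ATMs} from the acceptance problem for $k$-exponential-time \ATMs and observe that it is alternation-preserving. The standard reduction encodes a run of an \ATM $\mathcal{M}$ that starts existentially and makes at most $j-1$ alternations by producing a quantifier prefix $\vec{Q}$ that mirrors $\mathcal{M}$'s alternation pattern: a block of $\exists$'s for the first existential phase, then a block of $\forall$'s, and so on, with at most $j$ blocks total, so $\altern{\vec{Q}} = j$ (padding with a dummy leading $\exists$-quantified word if necessary, which is harmless since $Q_1 = \exists$ is already required and $\altern{\cdot}$ ignores repetitions within a block). Concretely, given a $\Sigma^{k\exptime}_j$ problem decided by such an $\mathcal{M}$, the reduction of \Cref{appendix:ATMs} outputs $(n,\vec{Q},\Turing)$ where $n = \poly(\card{x})$, $\Turing$ is the deterministic verifier that checks a guessed computation-history of $\mathcal{M}$ on $x$, and $\vec{Q}$ has exactly $j$ alternation blocks; since that reduction is already shown correct for the $k\aexppol$-case, the only additional observation needed is the bound $\altern{\vec{Q}} = j$, which follows by inspection of the block structure. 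Thus the $\Sigma^{k\exptime}_j$-prenex \TM problem is $\Sigma^{k\exptime}_j$-hard.

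The main obstacle I anticipate is bookkeeping rather than conceptual: ensuring that the number of alternation blocks in $\vec{Q}$ is exactly $j$ and not $j+1$ or $j-1$. This requires being careful about (i)~the convention that $\mathcal{M}$ starts existentially and that $\Sigma^{k\exptime}_j$ allows \emph{at most} $j-1$ alternations (so a $\Sigma^{k\exptime}_j$-complete problem can be taken to use exactly $j-1$), (ii)~whether the deterministic verifier $\Turing$ needs to read a final ``certificate'' word that would add a trailing quantifier of the same type as the last $\mathcal{M}$-phase (which does not increase $\altern{\vec{Q}}$), and (iii)~the degenerate boundary case $j = 1$, where $\Sigma^{k\exptime}_1 = k\nexptime$ and $\vec{Q}$ consists solely of $\exists$-quantifiers, so that the statement reduces to the purely existential fragment. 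Once these conventions are pinned down, both the membership direction and the hardness direction follow by the same arguments as for \Cref{theorem:kaexp-alternation-complexity}, now with the alternation count tracked explicitly; I would therefore present the proof as two short paragraphs, ``Membership'' and ``Hardness'', each a one-line delta on the corresponding part of the proof of \Cref{theorem:kaexp-alternation-complexity}.
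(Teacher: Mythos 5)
Your proposal is correct and follows essentially the same route as the paper: the paper also treats membership as immediate and obtains hardness by revisiting the reduction behind \Cref{theorem:kaexp-alternation-complexity} (i.e.~\Cref{lemma:almost-theorem-kaexp-alternation}) for a $k$-exponential-time \ATM with constantly bounded alternation, choosing a quantifier prefix that alternates in lockstep with the machine's phases and then stays constant on the remaining (dummy) tapes so that $\altern{\vec{Q}} = j$. Your bookkeeping caveats (exact block count, trailing same-type quantifiers, the $j=1$ case) are precisely the points the paper's sketch also leaves implicit.
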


\Cref{theorem:kaexp-j-prenex-complexity} implies that the problem is $k\nexptime$-complete on instances where $\vec{Q} \in \{\exists\}^n$.

\section{Alternation problems for multi-tiling systems}

The $k\aexppol$-prenex TM problem can be easily recast as a problem on multi-tiling systems.

\subparagraph*{Multi-tiling.}
A \emph{multi-tiling system} $\cP$ is a tuple $(\cT,\cT_0,\cTacc,\cH,\cV,\cM,n)$
such that $\cT$ is a finite set of \defstyle{tile types},
$\cT_0,\cTacc \subseteq \cT$ are sets of \defstyle{initial} and \defstyle{accepting} tiles, respectively, $n \in \PNat$ (written in unary) is the \emph{dimension} of the system, and $\cH,\cV,\cM \subseteq \cT \times \cT$ represent the
horizontal, vertical and multi-tiling matching relations, respectively.

Fix $k \in \PNat$.
We write $\widehat{\Alphabet}$ for the set of words of length $\tetra(k,n)$ over an alphabet~$\Alphabet$.
The \defstyle{initial row} $\tilinginit{\amap}$
of a map $\amap \colon [0,\tetra(k,n)-1]^2 \to \cT$
is the word $\amap(0,0),\amap(0,1),\dots,\amap(0,\tetra(k,n){-}1)$ from $\widehat{\cT}$.
A \emph{tiling} for the \emph{grid} $[0,\tetra(k,n)-1]^2$ is a tuple $(\amap_1,\amap_2,\dots,\amap_n)$ such that
\begin{description}[nosep,itemsep=1pt,topsep=1.5pt,after=\vspace{1.5pt}]
    \item[maps.\namedlabel{multit-maps}{\textbf{maps}}]
    $f_\ell \colon [0,\tetra(k,n)-1]^2 \to \cT$ assigns a tile type to each position
    of the grid, for all $\ell \in [1,n]$;
    \item[init \& acc.\namedlabel{multit-initacc}{\textbf{init \& acc}}]
    $\tilinginit{f_\ell} \in \widehat{\cT}_0$ for all $\ell \in [1,n]$, and 
    $\amap_n(\tetra(k,n)-1,j) \in \cTacc$ for some $0 \leq j < \tetra(k,n)$;
    \item[hori.\namedlabel{multit-hori}{\textbf{hori}}] $(\amap_\ell(i,j),\amap_\ell(i+1,j)) \in \cH$, 
    for every $\ell \in [1,n]$, $i \in [0,\tetra(k,n)-2]$ and $0 \leq j < \tetra(k,n)$;
    \item[vert.\namedlabel{multit-vert}{\textbf{vert}}] 
    $(\amap_\ell(i,j),\amap_\ell(i,j+1)) \in \cV$,
    for every $\ell \in [1,n]$, $j \in [0,\tetra(k,n)-2]$ and $0 \leq i < \tetra(k,n)$;
    \item[multi.\namedlabel{multit-multi}{\textbf{multi}}] 
    $(\amap_{\ell}(i,j),\amap_{\ell+1}(i,j)) \in \cM$,
    for every $1 \leq \ell < n$ and $0 \leq i,j < \tetra(k,n)$.
\end{description}
\vspace{2pt}

\begin{definition}[$k$-exp alternating multi-tiling problem]
    \label{definition:KEXP-alt-problem}
    Fix~$k \in \PNat$.

    \begin{tabular}{rl}
    {\rm\textbf{Input:}} & a multi-tiling system~$\cP = (\cT,\cT_0,\cTacc,\cH,\cV,\cM,n)$\\ & and ${\vec{Q} = (Q_1,\dots,Q_n) \in \{\exists,\forall\}^n}$ with $Q_1 = \exists$.\\
    {\rm\textbf{Question:}} & is it true that\\
        &\begin{tabular}{rl}
            ${Q_1 \aword_1 \in \widehat{\cT_0}} \dots {Q_n \aword_n \in \widehat{\cT_0}}$ : &
            there is a tiling $(\amap_1,\dots,\amap_n)$ of $[0,\tetra(k,n)-1]^2$\\ 
            &such that~$\tilinginit{\amap_\ell} = \aword_\ell$ for all $\ell \in \interval{1}{n}$ ?
        \end{tabular}
    \end{tabular}
\end{definition}

For the case where $k = 1$, deciding whether an instance of the $k$-exp alternating multi-tiling problem is accepted 
has been shown $\aexppol$-complete in~\cite[App.~E.7]{Molinari19}, whereas $k\aexppol$-completeness for arbitrary~$k$ can be shown thanks to~\Cref{theorem:kaexp-alternation-complexity}. Moreover, the problem becomes~$\Sigma^{k\exptime}_j$-complete whenever $\altern{\vec{Q}} = j$, by~\Cref{theorem:kaexp-j-prenex-complexity}.

\begin{restatable}{theorem}{TheoremKTILINGProblem}
    \label{theorem:kaexp-TILING-complexity}
    The $k$-exp alternating multi-tiling problem is $k\aexppol$-complete.
\end{restatable}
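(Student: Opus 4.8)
# Proof Proposal for Theorem \ref{theorem:kaexp-TILING-complexity}

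The plan is to prove $k\aexppol$-completeness by reduction from the $k\aexppol$-prenex \TM problem (\Cref{theorem:kaexp-alternation-complexity}) in both directions: membership by encoding tilings into computations of an \ATM, and hardness by encoding \nTM runs as tilings. The key observation is that the quantifier prefix $\vec{Q}$ is shared between the two problems and is of the same shape, so the alternation structure transfers for free; the only real work is the classical ``computations are tilings'' correspondence, now carried out on a grid of side $\tetra(k,n)$ rather than on an exponential grid.

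For \textbf{membership in $k\aexppol$}, I would describe an \ATM that, given $(\cP,\vec{Q})$, guesses the words $\aword_1,\dots,\aword_n \in \widehat{\cT_0}$ respecting the quantifier prefix: the $\ell$-th block of guesses (of $\tetra(k,n)$ symbols) is made existentially or universally according to $Q_\ell$, costing $\tetra(k,n)$ time per block and $g(n) = n$ alternations overall. After the prefix is fixed, the machine must verify existentially that there is a tiling $(\amap_1,\dots,\amap_n)$ with $\tilinginit{\amap_\ell} = \aword_\ell$. Here I use the standard trick: rather than writing down the full $n$-tuple of $\tetra(k,n)\times\tetra(k,n)$ maps (which has size $n\cdot\tetra(k,n)^2$, still $k$-exponential and hence affordable), the \ATM simply guesses each $\amap_\ell$ cell by cell on its work tape and checks conditions \ref{multit-maps}--\ref{multit-multi} locally. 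Each condition is a conjunction of $O(n\cdot\tetra(k,n)^2)$ local constraints, each checkable in time polynomial in the representation; the total running time is $\tetra(k,f(n))$ for a suitable polynomial $f$. Since the final verification adds no alternations, the alternation count stays $O(n)$, giving membership in $k\aexppol$, and in $\Sigma^{k\exptime}_j$ when $\altern{\vec{Q}} = j$.

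For \textbf{$k\aexppol$-hardness}, I would reduce from an instance $(n,\vec{Q},\Turing)$ of the $k\aexppol$-prenex \TM problem. The idea is that a tiling of the grid $[0,\tetra(k,n)-1]^2$ encodes a time–space diagram of a run of $\Turing$: row $j$ of map $\amap_\ell$ is the configuration of $\Turing$ at time $j$ on the content of tape $\ell$, with the positional state carried in the tile at the current head position. Tile types record a tape symbol together with optional head/state information; $\cH$ enforces that at most one head marker appears per row and that symbols away from the head are unchanged between consecutive rows (this needs the horizontal relation to propagate a ``head already seen / head coming'' flag, a routine but slightly fiddly encoding); $\cV$ enforces the one-step transition of $\Turing$ at the head position; $\cM$ synchronises the $n$ tapes so that a jump move moving the head from tape $\ell$ to tape $\ell'$ is reflected consistently, and so that the time coordinate agrees across all tapes. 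The set $\cT_0$ is chosen so that $\widehat{\cT_0}$ is exactly the set of well-formed initial rows: a word of length $\tetra(k,n)$ over $\Alphabet$ (the guessed input $\aword_\ell$) padded with blanks, with the head marker and state $\qinit$ placed on cell $0$ of tape $1$ only. The set $\cTacc$ contains the tiles carrying state $\qacc$. Because the grid has side $\tetra(k,n)$, a run reaching $\qacc$ within $\tetra(k,n)$ steps fits in the diagram, and conversely any accepting tiling yields such a run; thus the answer is preserved, and the quantifier prefix $\vec{Q}$ maps identically. The reduction is computable in polynomial time since $\cP$ has size polynomial in $(n,\Turing)$.

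The \textbf{main obstacle} is the faithful encoding of \nTM semantics into the matching relations $\cH,\cV,\cM$ — in particular the jump moves, which are non-local in the tape index and therefore must be threaded through $\cM$ across all $n$ maps at the same $(i,j)$ position, and the bookkeeping that guarantees exactly one head marker per row and correct propagation of unchanged cells. This is conceptually standard (it is the classical Cook–Levin / domino argument adapted to multi-tape machines), so I would present the construction at the level of which information each tile type carries and which local constraints each relation enforces, and state that verifying the equivalence ``accepting tiling iff accepting run'' is a routine induction on the time coordinate $j$. The refinement to $\Sigma^{k\exptime}_j$-completeness is then immediate, since both directions of the reduction preserve $\altern{\vec{Q}}$ and invoke \Cref{theorem:kaexp-j-prenex-complexity} in place of \Cref{theorem:kaexp-alternation-complexity}.
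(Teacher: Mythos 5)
Your route is the same as the paper's: membership by a straightforward guess-and-check \ATM, hardness by composing \Cref{theorem:kaexp-alternation-complexity} with the classical conversion of time--space diagrams of a \nTM into a multi-tiling system (the paper delegates exactly this conversion to Molinari's Prop.~E.7.6, observing that the tile types and relations depend only on the machine and not on its runtime). Two cosmetic remarks: with the paper's conventions the initial row spans the \emph{second} coordinate, so $\cH$ is the relation between consecutive configurations (time) and $\cV$ the relation inside a configuration (space), i.e.\ the opposite of the roles you assign them; and your parenthetical claim that the membership argument also yields $\Sigma^{k\exptime}_j$ membership is not immediate when $Q_n = \forall$, since the final guess of the tiling is an existential phase after a universal block --- but that concerns \Cref{theorem:kaexp-j-prenex-TILING-complexity}, not the present statement, for which the extra alternation is harmless.

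The genuine gap is in your treatment of the initial tiles, which is the one place where the alternating version differs from the plain Cook--Levin-style encoding. You stipulate that $\cT_0$ be chosen so that $\widehat{\cT_0}$ is ``exactly the set of well-formed initial rows, with the head marker and state $\qinit$ on cell $0$ of tape $1$ only''. This is impossible: by definition $\widehat{\cT_0}$ is the full power $\cT_0^{\tetra(k,n)}$, so it cannot express a global condition such as ``exactly one head marker, at position $0$, in map $1$'' (nor is any blank padding available, as the quantified words already have length $\tetra(k,n)$). If head-marked tiles are put into $\cT_0$, the quantifiers of the tiling problem range over malformed rows, and the equivalence breaks precisely at the universal quantifiers: a universally chosen malformed row (two head markers, or a head on the wrong map) admits no tiling, so the tiling statement becomes false even when the TM statement holds. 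The repair is to take $\cT_0$ in bijection with $\Alphabet$, carrying no head information, so that the quantifier prefix transfers verbatim, and to let the matching relations introduce the positional state in the row immediately above the initial one: tiles of that row carry flags, say from $\{\text{head},\text{after-head}\}$, whose within-row compatibility only allows (head, after-head) and (after-head, after-head), so a head, if present, is forced to sit at cell $0$ (a tile admitting no left neighbour can only occupy the leftmost position), while head-less rows can only copy themselves upwards and therefore never produce a tile of $\cTacc$, so they are excluded by the acceptance requirement on the last row of $\amap_n$. With this adjustment (and the one-row shift it entails, harmless given the slack in the instances produced by \Cref{theorem:kaexp-alternation-complexity}), your construction goes through and coincides with the conversion the paper cites.
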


\begin{restatable}{theorem}{CorollarySIGMAKTILINGProblem}
    \label{theorem:kaexp-j-prenex-TILING-complexity}
    The $k$-exp alternating multi-tiling problem is $\Sigma^{k\exptime}_j$-complete when restricted to inputs~$(\cP, \vec Q)$ such that $\altern{\vec{Q}} = j$.
\end{restatable}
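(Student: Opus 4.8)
The plan is to deduce \Cref{theorem:kaexp-j-prenex-TILING-complexity} from \Cref{theorem:kaexp-j-prenex-complexity} through the two quantifier-prefix-preserving reductions that already underlie \Cref{theorem:kaexp-TILING-complexity}; the only additional work is to keep the number of alternations under control.

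\emph{Hardness.} I would reduce from the $\Sigma^{k\exptime}_j$-prenex \TM problem. Given an instance $(n,\vec{Q},\Turing)$ with $\altern{\vec{Q}}=j$, build a multi-tiling system $\cP$ of dimension $n$ in which the $\ell$-th map $\amap_\ell$ is the space--time diagram of the $\ell$-th tape of $\Turing$ over the $\tetra(k,n)\times\tetra(k,n)$ grid: $\amap_\ell(i,j)$ stores the content of cell $j$ of tape $\ell$ at step $i$, together with a flag recording whether the shared head is on that cell and, if so, the current state. The relation $\cH$ forces each row to carry exactly one well-formed positional state; $\cV$ encodes one application of $\trans$, using that the new content and flag of a cell depend only on that cell and its two horizontal neighbours in the previous row; $\cM$ propagates the shared head between consecutive layers and implements jump moves; $\cT_0$ admits as initial rows exactly the encodings of configurations in positional state $(\qinit,1,0)$; and $\cTacc$ marks the cells in which $\Turing$ has entered $\qacc$, so that \ref{multit-initacc} expresses acceptance. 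Because $\Turing$ is deterministic, for each tuple of initial rows there is \emph{at most one} tiling extending it, so the clause ``there is a tiling $(\amap_1,\dots,\amap_n)$ with $\tilinginit{\amap_\ell}=\aword_\ell$'' ranges over a single candidate, and $(\cP,\vec{Q})$ is a yes-instance iff $(n,\vec{Q},\Turing)$ is. As $\vec{Q}$ is copied verbatim, $\altern{\vec{Q}}=j$, yielding $\Sigma^{k\exptime}_j$-hardness; dropping the restriction on $\vec{Q}$ gives the hardness half of \Cref{theorem:kaexp-TILING-complexity}.

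\emph{Membership.} Conversely I would reduce the multi-tiling problem to the $\Sigma^{k\exptime}_j$-prenex \TM problem. Fix a polynomial $p$ such that a word of length $\tetra(k,p(n))$ encodes a single map $\amap_\ell\colon[0,\tetra(k,n){-}1]^2\to\cT$, and build a deterministic machine $\Turing$ on $2n$ tapes: tapes $1,\dots,n$ carry candidate initial rows $\aword_1,\dots,\aword_n\in\widehat{\cT_0}$, tapes $n{+}1,\dots,2n$ carry candidate maps $\amap_1,\dots,\amap_n$, and within time $\tetra(k,p(n))$ the machine verifies the conditions \ref{multit-maps}, \ref{multit-hori}, \ref{multit-vert} and \ref{multit-multi}, the equalities $\tilinginit{\amap_\ell}=\aword_\ell$, and the acceptance part of \ref{multit-initacc}. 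With $\vec{Q}'=(Q_1,\dots,Q_n,\exists,\dots,\exists)$ having $n$ trailing existential quantifiers, one per guessed map, the instance $(p(n),\vec{Q}',\Turing)$ is equivalent to $(\cP,\vec{Q})$; when $\vec{Q}$ ends with an existential block these trailing existentials merge with it, so $\altern{\vec{Q}'}=\altern{\vec{Q}}=j$ and \Cref{theorem:kaexp-j-prenex-complexity} puts the problem in $\Sigma^{k\exptime}_j$. For arbitrary $\vec{Q}$ the extra alternations are affordable and one lands in $k\aexppol$, giving the membership half of \Cref{theorem:kaexp-TILING-complexity}.

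The step I expect to be the main obstacle is keeping this last reduction within $j$ alternations when $\vec{Q}$ ends with a universal block. The candidate tiling depends on every row $\aword_\ell$, the universally quantified ones included, so the existential guess of the maps cannot simply be moved ahead of those rows; absorbing it into the alternation budget calls for a finer bookkeeping of which fragment of the tiling is guessed alongside which $\aword_\ell$, or for a direct \ATM argument that re-exploits the uniqueness used in the hardness direction. Everything else is routine: the hardness construction is the $k=1$ argument of \cite[App.~E.7]{Molinari19} with $\tetra(1,n)$ replaced by $\tetra(k,n)$, and the membership verification is a plain syntactic check on a $k$-exponentially sized object.
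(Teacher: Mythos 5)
Your hardness direction is essentially the paper's own route: the paper obtains \Cref{theorem:kaexp-j-prenex-TILING-complexity} by applying the standard conversion between time--space diagrams of \TM computations and tilings (it cites \cite[Prop.~E.7.6]{Molinari19}) to the $\Sigma^{k\exptime}_j$-prenex \TM problem of \Cref{theorem:kaexp-j-prenex-complexity}, carrying the quantifier prefix over verbatim, exactly as in your layer-per-tape encoding. The only detail to watch there is that acceptance in the tiling problem is read off the last row of the \emph{$n$-th} map, while in your encoding the state is recorded only where the shared head sits and $\trans(\qacc,\cdot)$ parks the head on tape~$1$; so the accepting information must be made visible in layer $n$, e.g.\ by letting every layer carry the global state in each row and enforcing consistency through~$\cM$. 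That is routine.

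The genuine gap is the one you flag yourself: membership in $\Sigma^{k\exptime}_j$ when the final block of $\vec{Q}$ is universal. Appending the $n$ existential quantifiers for the maps then gives $\altern{\vec{Q}'}=j+1$, so \Cref{theorem:kaexp-j-prenex-complexity} only yields $\Sigma^{k\exptime}_{j+1}$, and neither of your suggested repairs closes this. Interleaving, i.e.\ guessing $\amap_\ell$ right after $\aword_\ell$, is unsound: it moves $\exists\amap_\ell$ to the left of later universal quantifiers, and through the \textbf{multi} constraints ($\cM$) the admissible choices for $\amap_\ell$ genuinely depend on $\amap_{\ell+1}$ and hence on initial rows quantified after $\aword_\ell$, so the interleaved statement is in general strictly stronger than the original one. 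The uniqueness of the extending tiling is a property of the particular systems produced by your hardness reduction (they simulate a deterministic $\Turing$), whereas membership must handle \emph{arbitrary} input systems, for which ``some tiling extends the given initial rows'' is an existential question about a witness of size roughly $n\cdot\tetra(k,n)^2$ that one cannot expect to decide deterministically within time $\tetra(k,\mathrm{poly}(n))$ (exhaustive search costs one exponential more). So, as written, your proposal proves hardness in full and membership only for prefixes whose last block is existential; for the remaining prefixes the upper bound needs an additional idea. For comparison, the paper offers no argument here either --- it simply declares the upper bounds straightforward after the membership of \Cref{theorem:kaexp-alternation-complexity} --- so you have in fact isolated precisely the point that its one-line justification glosses over, but your sketch does not yet resolve it.
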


With~\Cref{theorem:kaexp-alternation-complexity,theorem:kaexp-j-prenex-complexity} and results in~\cite[App.~E.7]{Molinari19} at hand, showing these two theorems is very simple.
Indeed, standard conversions between Turing machines and tiling systems works in this case, and in particular the proof of~\cite[Prop.~E.7.6]{Molinari19} shows
a well-known conversion between time-space diagrams of computations on TM and tiling systems where the set of domino types and matching relations is entirely determined by the description of the TM, and thus not by its runtime. 
This conversion is enough to establish the lower bounds of~\Cref{theorem:kaexp-alternation-complexity,theorem:kaexp-j-prenex-complexity} (the upper bounds are instead straightforward).
Because of this, and to keep the notes short, 
in the remaining sections we simply focus on establishing~\Cref{theorem:kaexp-alternation-complexity,theorem:kaexp-j-prenex-complexity}.

\section{Some properties of~$\tetra(k,n)$.}
Before moving to the proofs of~\Cref{theorem:kaexp-alternation-complexity} and~\Cref{theorem:kaexp-j-prenex-complexity}, 
we discuss some trivial properties of the tetration function~$\tetra$, 
which we later need.
In~\Cref{lemma:tetra-property-1,lemma:tetra-property-2,lemma:tetra-property-3} below, let~$k,n \in \PNat$.

\begin{lemma}
    \label{lemma:tetra-property-1}
    $p(\tetra(k,n)) \leq \tetra(k,p(n))$, 
    for every polynomial~$p(\avar) = \alpha \avar^d + \beta$ with~$\alpha,d,\beta \in \PNat$.
\end{lemma}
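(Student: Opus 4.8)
The plan is to proceed by induction on $k \in \PNat$. Throughout, fix a polynomial $p(\avar) = \alpha \avar^d + \beta$ with $\alpha, d, \beta \in \PNat$; note in particular that $p$ is monotone non-decreasing on $\Nat$ and that $p(m) \geq m$ for all $m \geq 1$, since $\alpha, d \geq 1$ gives $\alpha m^d \geq m$. These two facts are the only properties of $p$ we will use, so it is worth isolating them at the start.

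For the base case $k = 1$, we must show $p(2^n) \leq 2^{p(n)}$, i.e.\ $\alpha (2^n)^d + \beta \leq 2^{\alpha n^d + \beta}$. First I would bound the left-hand side by $\alpha 2^{nd} + \beta \leq 2^{nd + \log_2 \alpha + 1}$ (using $\beta \leq \alpha 2^{nd}$ crudely, or just $x + y \leq 2\max(x,y) \leq 2^{1+\log_2\max(x,y)}$ type estimates), and then check that the exponent $nd + \log_2\alpha + 1$ is at most $\alpha n^d + \beta$. Since $\alpha, \beta, d \geq 1$ and $n \geq 1$, one has $\alpha n^d \geq n^d \geq nd$ is not quite immediate for small $d$, so a slightly cleaner route is: $2^{p(n)} = 2^{\alpha n^d} \cdot 2^\beta \geq 2^{n^d} \cdot 2 \geq 2^{nd} \cdot 2 \geq \alpha \cdot 2^{nd} \cdot 2 / \alpha$... — in any case this is a routine finite estimate; the honest statement is that for $n \geq 1$ we have $\alpha(2^n)^d + \beta \leq (\alpha + \beta) 2^{nd} \leq 2^{nd + \lceil \log_2(\alpha+\beta)\rceil} \leq 2^{\alpha n^d + \beta}$, where the last inequality holds because $nd + \lceil\log_2(\alpha+\beta)\rceil \leq \alpha n^d + \beta$ can be verified directly (the dominant term $\alpha n^d$ absorbs the linear-in-$n$ term $nd$, and the additive constant is absorbed by $\beta$ since $\beta \geq 1$ and $n^d(\alpha - d/\text{stuff})$...). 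I will present this as a short explicit calculation rather than hand-waving.

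For the inductive step, assume $p(\tetra(k,n)) \leq \tetra(k, p(n))$; I want $p(\tetra(k+1,n)) \leq \tetra(k+1,p(n))$. Unfolding the definition, $\tetra(k+1,n) = 2^{\tetra(k,n)} = \tetra(1, \tetra(k,n))$, so
\[
  p(\tetra(k+1,n)) = p\bigl(\tetra(1,\tetra(k,n))\bigr) \leq \tetra\bigl(1, p(\tetra(k,n))\bigr) \leq \tetra\bigl(1, \tetra(k,p(n))\bigr) = \tetra(k+1,p(n)),
\]
where the first inequality is the base case $k=1$ applied with argument $\tetra(k,n)$ in place of $n$, the second is monotonicity of $\tetra(1,\cdot) = 2^{(\cdot)}$ combined with the induction hypothesis, and the last equality again just re-folds the definition. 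This chaining is the conceptual core and is clean once the base case is in hand.

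The main obstacle is therefore entirely in the base case $k=1$: making the finite arithmetic estimate $\alpha(2^n)^d + \beta \leq 2^{\alpha n^d + \beta}$ airtight for all $n, \alpha, d, \beta \geq 1$ without an unpleasant case split. The cleanest packaging I know is to note $\alpha (2^n)^d + \beta \leq (\alpha+\beta)\,2^{nd}$ and $2^{\alpha n^d + \beta} \geq 2 \cdot 2^{\alpha n^d} \geq 2 \cdot 2^{n^d}$, then observe $n^d \geq nd$ fails only when $d=1$ (where it is an equality) so in fact $n^d \geq nd$ always holds for $n \geq 1$ — giving $2^{\alpha n^d+\beta} \geq 2 \cdot 2^{nd}$ — and finally $(\alpha+\beta) \leq 2^{\alpha+\beta-1} \leq 2$ is false in general, so instead absorb $(\alpha+\beta)$ into the tower height: write $2^{\alpha n^d + \beta} \geq 2^{n^d + \alpha + \beta - 1} \geq 2^{\lceil\log_2(\alpha+\beta)\rceil} \cdot 2^{nd} \geq (\alpha+\beta) 2^{nd}$, valid since $n^d + \alpha+\beta-1 - nd \geq \lceil \log_2(\alpha+\beta)\rceil$ because $\alpha+\beta-1 \geq \log_2(\alpha+\beta)$ for all $\alpha+\beta \geq 2$ and $n^d \geq nd$. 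I would write out exactly this chain in the final proof.
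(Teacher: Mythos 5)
Your overall strategy coincides with the paper's: induction on $k$, with the inductive step obtained by applying the exponential-versus-polynomial estimate at the argument $\tetra(k,n)$ and then using monotonicity of $2^{(\cdot)}$ together with the induction hypothesis. That part of your plan is fine and is literally the paper's chain $\alpha(2^{\tetra(k-1,n)})^d+\beta \leq 2^{\alpha\tetra(k-1,n)^d+\beta} \leq 2^{\tetra(k-1,\alpha n^d+\beta)}$.

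The gap is in the base case, exactly where you located the ``main obstacle''. Your final chain rests on the claim that $n^d \geq nd$ for all $n\geq 1$; this is false: for $n=1$ and $d\geq 2$ one has $n^d=1 < d = nd$ (your parenthetical has it backwards — $d=1$ is the equality case, $n=1$ is the failure case). Moreover this is not a repairable arithmetic slip, because the inequality you are trying to establish, $\alpha(2^n)^d+\beta\leq 2^{\alpha n^d+\beta}$, genuinely fails at $n=1$: take $\alpha=\beta=1$, $d=3$, giving $2^3+1=9\not\leq 2^{2}=4$. (The paper's own proof labels this base case ``trivial'' and so glosses over the same edge case; its later invocations of the lemma are unaffected, since they use either $d=1$ or arguments of size at least $2$.) For $n\geq 2$ your chain does go through — there $n^{d-1}\geq 2^{d-1}\geq d$ gives $n^d\geq nd$, and your remaining estimates $\alpha n^d+\beta\geq n^d+\alpha+\beta-1$ and $\alpha+\beta-1\geq\lceil\log_2(\alpha+\beta)\rceil$ are correct — and that restricted base estimate also suffices for your inductive step, since there it is applied at the argument $\tetra(k,n)\geq 2$. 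So the honest fix is to prove the base inequality only for arguments $\geq 2$ and to flag that the statement needs $n\geq 2$ (or some analogous restriction) rather than all $n\in\PNat$; as written, neither your chain nor any other argument can close the case $n=1$, $d\geq 2$.
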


\begin{proof}
    Straightforward induction on~$k \in \PNat$:

    \begin{description}[nosep,leftmargin=*,itemsep=3pt]
        \item[base case: $k = 1$.] 
         Trivially, $\alpha (2^n)^d + \beta \leq 2^{\alpha n^d + \beta}$.
        \item[induction step: $k > 1$.]
        By induction hypothesis, $\alpha\tetra(k-1,n)^d + \beta \leq \tetra(k-1, \alpha n^d + \beta)$, so,\\[3pt]
        $\alpha \tetra(k,n)^d + \beta 
        = \alpha (2^{\tetra(k-1,n)})^d + \beta 
        \leq 2^{\alpha\tetra(k-1,n)^d + \beta}
        \leq 2^{\tetra(k-1, \alpha n^d + \beta)}
        = \tetra(k,\alpha n^d + \beta)$.
        \qedhere
    \end{description}
\end{proof}

\begin{lemma}
    \label{lemma:tetra-property-2}
    $\tetra(k,n)^2 \leq \tetra(k,2n)$.
\end{lemma}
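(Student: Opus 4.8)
The plan is to prove $\tetra(k,n)^2 \le \tetra(k,2n)$ by a straightforward induction on $k \in \PNat$, in the same style as the proof of \Cref{lemma:tetra-property-1}.

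For the base case $k = 1$ one computes directly $\tetra(1,n)^2 = (2^n)^2 = 2^{2n} = \tetra(1,2n)$, so the inequality holds (in fact with equality).

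For the inductive step $k > 1$, I would unfold the top of the tower on both sides: $\tetra(k,n)^2 = (2^{\tetra(k-1,n)})^2 = 2^{2\,\tetra(k-1,n)}$ and $\tetra(k,2n) = 2^{\tetra(k-1,2n)}$. Since $t \mapsto 2^t$ is monotone, it then suffices to show $2\,\tetra(k-1,n) \le \tetra(k-1,2n)$. The induction hypothesis gives $\tetra(k-1,n)^2 \le \tetra(k-1,2n)$, and since $k-1 \ge 1$ and $n \ge 1$ we have $\tetra(k-1,n) \ge \tetra(1,1) = 2$, whence $2\,\tetra(k-1,n) \le \tetra(k-1,n)^2$. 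Chaining these two bounds yields $2\,\tetra(k-1,n) \le \tetra(k-1,2n)$, and hence $\tetra(k,n)^2 \le \tetra(k,2n)$, completing the induction.

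I do not expect any real obstacle here: the only subtlety is that the step $2\,\tetra(k-1,n) \le \tetra(k-1,n)^2$ requires $\tetra(k-1,n) \ge 2$, which is precisely why the case $k=1$ has to be peeled off on its own (for $k=1$ the argument would instead involve $\tetra(0,n)=n$, which can equal $1$). Note also that \Cref{lemma:tetra-property-1} does not give the statement directly — taking $p(x)=x^2$ would only yield $\tetra(k,n)^2 \le \tetra(k,n^2)$, and $n^2 \le 2n$ fails for $n \ge 3$ — so the self-contained induction above is the natural route.
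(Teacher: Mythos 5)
Your proof is correct, but it takes a slightly different route from the paper's. The paper does not run an induction in \Cref{lemma:tetra-property-2} at all: after handling $k=1$ directly (exactly as you do), for $k>1$ it obtains the key exponent bound $2\,\tetra(k-1,n) \leq \tetra(k-1,2n)$ by invoking \Cref{lemma:tetra-property-1} with the linear map $\avar \mapsto 2\avar$, and then concludes via $\tetra(k,n)^2 = 2^{2\tetra(k-1,n)} \leq 2^{\tetra(k-1,2n)} = \tetra(k,2n)$. You instead establish that same exponent bound by a self-contained induction on $k$, chaining the induction hypothesis $\tetra(k-1,n)^2 \leq \tetra(k-1,2n)$ with the elementary observation $2t \leq t^2$ for $t \geq 2$ (valid since $k-1,n \in \PNat$ gives $\tetra(k-1,n) \geq 2$). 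The paper's argument is shorter because it reuses \Cref{lemma:tetra-property-1}; yours is independent of it and, incidentally, sidesteps the small mismatch that \Cref{lemma:tetra-property-1} is stated for polynomials $\alpha\avar^d+\beta$ with $\beta \in \PNat$, so $p(\avar)=2\avar$ is not literally of that form (a harmless detail, as the proof of \Cref{lemma:tetra-property-1} also works with $\beta=0$). Your closing remark that $p(\avar)=\avar^2$ cannot be used directly is accurate, and both the base-case peeling and the monotonicity step are handled correctly.
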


\begin{proof}
    For $k = 1$, we have $\tetra(1,n) = (2^n)^2 = 2^{2n} = \tetra(1,2n)$. For $k > 1$, by~\Cref{lemma:tetra-property-1} $2\tetra(k-1,n) \leq \tetra(k-1,2n)$. So,
    ${\tetra(k,n)^2 = (2^{\tetra(k-1,n)})^2 = 2^{2\tetra(k-1,n)} \leq 2^{\tetra(k-1,2n)} = \tetra(k,2n)}$.
\end{proof}

We inductively define the function $\klog{k} : \Nat \to \Nat$ as follows:
\begin{center}
$\klog{1}(\avar) \egdef \ceil{\log_2(\avar)}$ \quad
and \quad
$\klog{k+1}(\avar) \egdef \klog{k}(\ceil{\log_2(\avar)})$.
\end{center}

\begin{lemma}
    \label{lemma:tetra-property-3}
    For every $m \in \Nat$, \
    $m > \tetra(k,n)$ if and only if $\klog{k}(m) > n$.
\end{lemma}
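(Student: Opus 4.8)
The plan is to prove the biconditional by induction on $k \in \PNat$, peeling off one logarithm at each step and reducing to the previous case. Both directions follow from the same computation, so I would actually prove the equivalent statement "$m > \tetra(k,n) \iff \klog{k}(m) > n$" directly by unwinding the definitions of $\tetra$ and $\klog{k}$ in parallel.

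For the base case $k = 1$: I need $m > 2^n \iff \ceil{\log_2(m)} > n$. Since $n \in \Nat$ and $\ceil{\log_2(m)}$ is an integer, $\ceil{\log_2(m)} > n$ is equivalent to $\ceil{\log_2(m)} \geq n+1$, which (by the definition of the ceiling) holds iff $\log_2(m) > n$, i.e. iff $m > 2^n$. One has to be slightly careful about $m = 0$, where $\log_2$ is undefined; here the convention must be $\klog{1}(0) = 0$ (or any value $\leq n$ — note $n \geq 1$), and indeed $0 > 2^n$ is false, so the equivalence still holds. I would note this edge case in passing but not dwell on it.

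For the induction step $k > 1$: by definition $\klog{k}(m) = \klog{k-1}(\ceil{\log_2 m})$ and $\tetra(k,n) = 2^{\tetra(k-1,n)}$. Applying the base-case reasoning, $m > 2^{\tetra(k-1,n)} \iff \ceil{\log_2 m} > \tetra(k-1,n)$; and then applying the induction hypothesis (with $m' := \ceil{\log_2 m}$ in place of $m$), $\ceil{\log_2 m} > \tetra(k-1,n) \iff \klog{k-1}(\ceil{\log_2 m}) > n$, which is exactly $\klog{k}(m) > n$. Chaining these two equivalences closes the induction.

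There is no real obstacle here; the only thing requiring minor attention is the degenerate inputs to $\log_2$ (the arguments $m = 0$ and, after one logarithm, $m = 1$ giving $\ceil{\log_2 1} = 0$, which then feeds $0$ into $\klog{k-1}$). These are handled by the same observation as in the base case — $\tetra(j,n) \geq \tetra(j,1) = \tetra(j-1,2) \geq 2 > 0$ for all $j \geq 1$ and $n \geq 1$, so "$m > \tetra(j,n)$" is simply false at $m \in \{0,1\}$, while the corresponding $\klog{}$ value is $0 \leq n$ — so both sides are false and the equivalence is vacuously maintained. I expect the write-up to be only a few lines.
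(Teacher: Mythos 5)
Your proof is correct and follows essentially the same route as the paper: induction on $k$, reducing $m > \tetra(k,n)$ to $\ceil{\log_2 m} > \tetra(k-1,n)$ via the integrality of the threshold and then invoking the induction hypothesis (the paper merely splits the biconditional into two implications, arguing the backward one by contrapositive, where you chain equivalences). The remark about degenerate arguments of $\log_2$ is a harmless refinement that the paper glosses over.
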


\begin{proof}
    Straightforward induction on~$k \in \PNat$:

    \begin{description}[nosep,leftmargin=*,itemsep=3pt]
        \item[base case: $k = 1$.] 
         \ProofRightarrow
         Suppose $m > 2^n$. Then, $\klog{1}(m) = \ceil{\log_2(m)} \geq \log_2(m) > n$.

         \ProofLeftarrow 
         Conversely, suppose $m \leq 2^n$. Then $\log_2(m) \leq n$. As $n \in \Nat$, $\ceil{\log_2(m)} \leq n$.
        \item[induction step: $k > 1$.]
        \ProofRightarrow 
        Suppose $m > \tetra(k,n)$. So, $\ceil{\log_2(m)} \geq \log_2(m) > \tetra(k-1,n)$.
        By induction hypothesis, $\klog{k-1}(\ceil{\log_2(m)}) > n$, i.e.~$\klog{k}(m) > n$.

        \ProofLeftarrow 
        Conversely, suppose $m \leq \tetra(k,n)$. 
        Therefore, $\log_2(m) \leq \tetra(k-1,n)$. 
        Since $n \in \Nat$, 
        $\ceil{\log_2(m)} \leq \tetra(k-1,n)$.
        By induction hypothesis, $\klog{k}(m) = \klog{k-1}(\ceil{\log_2(m)}) \leq n$.
        \qedhere
    \end{description}
\end{proof}

\Cref{lemma:tetra-property-3} allows us to check whether $m > \tetra(k,n)$ holds in time~$\BigO{k \cdot m + n}$,
by first computing~$r = \klog{k}(m)$ and then testing whether $r > n$.
Fundamentally, in this way we avoid computing~$\tetra(k,n)$ explicitly.
This fact is later used 
in~\Cref{lemma:almost-theorem-kaexp-alternation}, where we introduce a \nTM that shall test whether its inputs~$\aword_1,\dots,\aword_n$ are greater than~$\tetra(k,n)$, while running in polynomial time in $\card{\aword_1},\dots,\card{\aword_n}$, $k$ and $n$.
For similar reasons, this machine also requires to compute the 
\defstyle{integer square root}, defined as~$\isqrt{\avar} \egdef \floor{\sqrt{\avar}\,}$.

\begin{lemma}
    \label{lemma:isqrt}
    Given $n, m \in \Nat$, \
    $m \geq n^2$ if and only if $\isqrt{m} \geq n$.
\end{lemma}

\begin{proof}
    \ProofRightarrow 
    Suppose $m \geq n^2$. Then, $\sqrt{m} \geq n$. 
    Since $n \in \Nat$, we conclude $\floor{\sqrt{m}} \geq n$.

    \ProofLeftarrow 
    Suppose $\isqrt{m} \geq n$. Trivially, $\sqrt{m} \geq n$ and so $m \geq n^2$.
\end{proof}

\section{Alternating Turing machines and proof of~\Cref{theorem:kaexp-alternation-complexity}}
\label{appendix:ATMs}

This section contains the auxiliary technical tools needed to prove~\Cref{theorem:kaexp-alternation-complexity}, as well as the proof of this theorem and 
of~\Cref{theorem:kaexp-j-prenex-complexity}.
The section does not assume any prior knowledge on alternating Turing machines (ATM), which we define adapting the presentation of~\cite{ChandraKS81}.

\subparagraph*{Alternating Turing machines.}
An~ATM 
$\ATuring = (\Alphabet,\States_{\exists},\States_{\forall},\qinit,\qacc,\qrej,\trans)$
is a single-tape machine 
where $\Alphabet$ is a finite \defstyle{alphabet} containing at least two symbols, one of which is the \defstyle{blank symbol}~$\blanksymb$, 
the sets
$\States_{\exists}$ and $\States_{\forall}$ are two disjoint sets of \defstyle{states}, respectively called \defstyle{existential} and \defstyle{universal} states, $\qinit \in \States_{\exists}$ is the \defstyle{initial state}, and $\qacc$ and $\qrej$ are two auxiliary states 
that do not belong to $\States_{\exists} \cup \States_{\forall}$. 
The state~$\qacc$ is the \defstyle{accepting state}
and the state~$\qrej$ is the \defstyle{rejecting state}.
Every \defstyle{cell} of the tape contains a symbol from $\Alphabet$ and is indexed with a number from $\Nat$ (i.e.~the tapes are infinite only in one direction).
Let $\States \egdef \States_{\exists} \cup \States_{\forall} \cup \{\qacc,\qrej\}$.
The \defstyle{transition function} is a function of the form~$\trans \colon \States \times \Alphabet \to \powerset{(\States \times \Alphabet \times \{-1,+1\})}$, 
where for every~$\asymbol \in \Alphabet$, $\delta(\qacc,\asymbol) = \{(\qacc,\asymbol,+1)\}$,  $\delta(\qrej,\asymbol) = \{(\qrej,\asymbol,-1)\}$, 
and for every $\astate \in \States_{\forall} \cup \States_{\exists}$, $\delta(\astate,\asymbol) \neq \emptyset$.
We write $\card{\ATuring}$ for the size of~$\ATuring$, i.e.~the number of symbols needed in order to describe~$\ATuring$.

A \defstyle{configuration} of~$\ATuring$ is given by a triple $(\aword,\astate,k)$. 
The finite word~$\aword \in \Alphabet^*(\Alphabet \setminus \{\blanksymb\})$ specifies the content of the first $\card{\aword}$ cells fo the tape, after which the tape only contains the blank symbol~$\blanksymb$. Notice that $\aword$ does not end with~$\blanksymb$, which means that distinct words do not describe the same content of the tape.
The~\emph{positional state}~$(\astate,k)$ describes the current state~$\astate$ of the machine together with the position of the read/write head on the tape, here corresponding to the $k$-th cell. 

Let~$c = (\aword,\astate,k)$ be a configuration of~$\ATuring$ and consider the symbol $\asymbol \in \Alphabet$ read by the read/write head.
We write $\Delta(c)$ for the set of configurations reachable in exactly one step from~$c$. 
In particular, $(\aworld',\astate',k') \in \Delta(c)$ if and only if
there is $(\astate'',\asymbolbis,i) \in \trans(\astate,a)$ such that
\begin{itemize}
    \item $k+i \in \Nat$ (i.e.~$k \neq 0$ or $i \neq -1$), $k' = k+i$, 
    $\astate'' = \astate'$ and $\aworld'$ describes the tape after the read/write head modifies the content of the $k$-th cell to~$\asymbolbis$, or 
    \item $k + i \not\in \Nat$, $k' = 0$, $\aworld' = \aworld$ and $\astate' = \qrej$.
\end{itemize}
A~\emph{computation path} of~$\ATuring$ is a sequence $(c_0,\dots,c_d)$ of configurations such that $c_{i} \in \Delta(c_{i-1})$ holds for every $i \in [1,d]$. If the states~$\qacc$ and~$\qrej$ appear in the last configuration~$c_d$,
the path is called \emph{terminating}.
Notice that if $\qacc$ (or~$\qrej$) belongs to some $c_i$ ($i \in [1,d]$), then by definition of $\trans$ it also belongs to every $c_j$ with $j > i$, and thus the configuration is terminating.

To describe the notion of acceptance for~$\ATuring$,
we introduce a \emph{partial} labelling function $\gamma_{\ATuring} \colon \Alphabet^* \times \States \times \Nat \to \{\tacc,\trej\}$ 
(we drop the prefix~$\ATuring$ from~$\gamma_{\ATuring}$ when clear from the context).
Given $\aword \in \Alphabet^*$ and $k \in \Nat$, 
the function~$\gamma$ is defined as follows:
\begin{itemize}
    \item $\gamma(\aword,\qacc,k) \egdef \tacc$ and $\gamma(\aword,\qrej,k) \egdef \trej$,
    \item for every $\astate \in \States_{\exists}$,
    \begin{itemize}
        \item $\gamma(\aword,\astate,k) = \tacc$ if 
        $\gamma(\aword',\astate',k') = \tacc$ holds for some~$(\aworld',\astate',k') \in \Delta(\aword,\astate,k)$,
        \item $\gamma(\aword,\astate,k) = \trej$ if 
        $\gamma(\aword',\astate',k') = \trej$ holds for every~$(\aworld',\astate',k') \in \Delta(\aword,\astate,k)$,
        \item otherwise, $\gamma(\aword,\astate,k)$ undefined,
    \end{itemize}
    \item for every $\astate \in \States_{\forall}$,
    \begin{itemize}
        \item $\gamma(\aword,\astate,k) = \tacc$ if 
        $\gamma(\aword',\astate',k') = \tacc$ holds for every~$(\aworld',\astate',k') \in \Delta(\aword,\astate,k)$,
        \item $\gamma(\aword,\astate,k) = \trej$ if 
        $\gamma(\aword',\astate',k') = \trej$ holds for some~$(\aworld',\astate',k') \in \Delta(\aword,\astate,k)$,
        \item otherwise $\gamma(\aword,\astate,k)$ undefined.
    \end{itemize}
\end{itemize}
The language described by $\ATuring$ is $\alang(\ATuring) \egdef \{\aword \in \Alphabet^* \mid \gamma(\aword,\qinit,0) = \tacc\}$.
For our purposes, we are only interested in the notions of time-bounded and alternation-bounded acceptance.
We say that~$\ATuring$ \emph{accepts} (resp.~\emph{rejects}) a word~$\aword \in \Alphabet^*$ in time~$t \in \Nat$ whenever examining all the terminating computation paths~$(c_0,\dots,c_d)$, where $c_0 = (\aword,\qinit,0)$ and $d \in [0,t]$, is sufficient to conclude whether $\gamma(\aword,\qinit,0) = \tacc$ (resp.~$\gamma(\aword,\qinit,0) = \trej$). 
The machine~$\ATuring$ \emph{halts} on the word~$\aword$ in time~$t$
if it either accepts or rejects~$\aword$ in time~$t$.

Consider functions~$f,g \colon \Nat \to \Nat$.
The \ATM~$\ATuring$ is $f$-time bounded and~$g$-alternation bounded if and only if for every $\aword \in \Alphabet^*$, $\ATuring$ halts on $\aword$ in time~$f(\card{\aword})$ and, 
along each terminating computation path~$(c_0,\dots,c_d)$, where $c_0 = (\aword,\qinit,0)$ and~$d \in [0,t]$, the positional state of the machine alternate between existential and universal states at most~$g(\card{\aword})$ times. 

\begin{proposition}[\cite{ChandraKS81}]
    \label{proposition:kaexppol-membership}
    Consider $k \in \Nat_+$, polynomials~$f,g \colon \Nat \to \Nat$, and~$h(\avar) \egdef \tetra(k,f(\avar))$. Let $\ATuring$ be an $h$-time bounded and $g$-alternation bounded \ATM on alphabet~$\Alphabet$. Let~${\aword \in \Alphabet^*}$.
    The problem of deciding whether $\aword \in \alang(\ATuring)$
    is $k\aexppol$-complete.
\end{proposition}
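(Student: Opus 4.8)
I would split the statement into its two halves and treat them in turn: \emph{membership} --- every $h$-time and $g$-alternation bounded \ATM{} $\ATuring$ satisfies $\alang(\ATuring)\in k\aexppol$ --- and \emph{hardness} --- there is one such $\ATuring$ for which $\alang(\ATuring)$ is $k\aexppol$-hard. Together these are what ``the problem of deciding whether $\aword\in\alang(\ATuring)$ is $k\aexppol$-complete'' should mean here: the problem lies in $k\aexppol$ for every admissible $\ATuring$, and is $k\aexppol$-complete for a suitable choice of $\ATuring$.

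For membership, the only gap with the definition of $k\aexppol$ is that the acceptance notion used in these notes --- the partial labelling $\gamma_{\ATuring}$ together with ``halts in time $t$'' --- permits computation paths that diverge without ever reaching $\qacc$ or $\qrej$, whereas a machine witnessing membership in $k\aexppol$ should terminate inside its time budget on every branch. I would close it with a standard clock: let $\ATuring'$ simulate $\ATuring$ step by step while counting down a counter initialised to $\tetra(k,f(\card{\aword}))$ (maintained by the usual recursive ``$k$-fold counter''), and, when the counter reaches $0$ at a non-halting configuration, clamp the branch by moving to $\qrej$ if the current state is existential and to $\qacc$ if it is universal. These new transitions target states outside $\States_\exists\cup\States_\forall$, so they introduce no alternations and $\ATuring'$ is still $g$-alternation bounded; and the counter together with the head shuttling contribute only a quadratic overhead, so $\ATuring'$ halts within $\tetra(k,2f(\card{\aword}))$ steps by~\Cref{lemma:tetra-property-2}, i.e.~exactly the shape the definition of $k\aexppol$ asks for. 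It remains to check $\alang(\ATuring')=\alang(\ATuring)$: since $\ATuring$ is $h$-time bounded, the value $\gamma_{\ATuring}(\aword,\qinit,0)$ is already determined by the depth-$\tetra(k,f(\card{\aword}))$ truncation of the computation tree, whose divergent frontier nodes carry no information, and the clamping rule of $\ATuring'$ merely replaces each such undefined value by $\trej$ below an existential node and by $\tacc$ below a universal node. Because the evaluation at existential and universal nodes ($\vee$ and $\wedge$) is monotone for the information order $\bot\sqsubseteq\tacc$, $\bot\sqsubseteq\trej$, refining undefined leaves cannot move a root label away from $\tacc$, and dually for $\trej$; as $\ATuring$ halts on every input, one of these two cases always applies, so $\alang(\ATuring)=\alang(\ATuring')\in k\aexppol$.

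For hardness I would appeal to the definition of $k\aexppol$ directly: each $L\in k\aexppol$ equals $\alang(\mathcal{M}_L)$ for some \ATM{} $\mathcal{M}_L$ that is $\tetra(k,f_L)$-time and $g_L$-alternation bounded with $f_L,g_L$ polynomial (and, by the membership paragraph, may be assumed to halt within these bounds). Then I would build a single universal machine $\ATuring^\ast$ acting on encodings $\langle\mathcal{M},x,1^m\rangle$: it simulates $\mathcal{M}$ on $x$ under a time clock of $\tetra(k,m)$ steps and an alternation clock of $m$ alternations, applying the same clamping rule ($\qrej$ at an existential state, $\qacc$ at a universal state) whenever either clock expires. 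Step-by-step \ATM{} simulation costs only a polynomial factor on top of reading and rewriting the simulated tape, so on inputs of length $N$ the machine $\ATuring^\ast$ runs in time $N^{\BigO{1}}\cdot\tetra(k,N)^{\BigO{1}}\le\tetra(k,\BigO{N})$ (iterating~\Cref{lemma:tetra-property-2}) with $\BigO{N}$ alternations; in particular it is $\tetra(k,p)$-time and $p$-alternation bounded for a polynomial $p$, hence an admissible instance of the problem. For $L$ witnessed by $\mathcal{M}_L$, the map $x\mapsto\langle\mathcal{M}_L,\,x,\,1^{f_L(\card{x})+g_L(\card{x})}\rangle$ is polynomial-time computable, and --- reusing the monotonicity argument above, since the two clocks now only ever refine divergent frontier nodes of $\mathcal{M}_L$'s computation tree --- $\ATuring^\ast$ accepts this encoding iff $x\in L$. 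Thus $L\le_{\mathrm p}\alang(\ATuring^\ast)$, so $\alang(\ATuring^\ast)$ is $k\aexppol$-hard, and together with membership it is $k\aexppol$-complete.

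The simulations and the clock bookkeeping are routine, their polynomial and quadratic overheads being swallowed by~\Cref{lemma:tetra-property-2}. The one step that genuinely needs care --- and which is used identically in both directions --- is reconciling the $\gamma_{\ATuring}$-style acceptance of these notes, tolerant of divergent, irrelevant branches, with the strictly time-bounded \ATM{}s underlying the definition of $k\aexppol$; the three-valued monotonicity observation is exactly what makes the clamped machines $\ATuring'$ and $\ATuring^\ast$ provably language-equivalent to the originals.
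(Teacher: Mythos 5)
The paper does not actually prove this proposition: it is imported wholesale from Chandra--Kozen--Stockmeyer~\cite{ChandraKS81} and used only as the source problem for the reduction in \Cref{theorem:kaexp-alternation-complexity}, so there is no in-paper argument to compare yours against. Judged on its own, your proof is correct and follows the standard route one would find spelled out in the literature. Two remarks. First, your reading of the statement (membership for \emph{every} $h$-time, $g$-alternation bounded \ATM, hardness for a \emph{suitable} one, e.g.\ a clocked universal machine) is the right one --- taken literally, completeness for an arbitrary fixed $\ATuring$ is false (a machine accepting everything), and the way the paper uses the proposition is precisely your reading: every language in $k\aexppol$ is $\alang(\ATuring)$ for some admissible $\ATuring$, so reducing the acceptance problem suffices for hardness. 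Second, the one point that genuinely needs care is the one you isolate: reconciling the partial labelling $\gamma_{\ATuring}$, which tolerates divergent branches, with a machine that halts within $\tetra(k,\cdot)$ steps on every branch. Your three-valued monotonicity argument handles this correctly, and it also covers the slightly subtler case of the alternation clock in the hardness direction: a node of the depth-$h(\card{\aword})$ truncation whose alternation count already exceeds $g(\card{\aword})$ cannot lie on any terminating path of length at most $h(\card{\aword})$ (by $g$-alternation boundedness), hence evaluates to $\bot$ in the truncated evaluation, so clamping it is again a refinement; it would be worth saying this explicitly rather than calling all clamped nodes ``divergent frontier nodes''. Note also that, given monotonicity plus $h$-time boundedness, the particular clamping values ($\qrej$ at existential, $\qacc$ at universal states) are immaterial --- any fixed choice preserves the language; what the choice does buy is that the clamped universal machine remains sound even on malformed or under-clocked encodings, which is convenient but not needed for your reduction since the clocks $\tetra(k,m)$ and $m$ are chosen to dominate $\tetra(k,f_L(\card{x}))$ and $g_L(\card{x})$. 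The bookkeeping bounds (counter maintenance, head shuttling, polynomial simulation overhead absorbed via \Cref{lemma:tetra-property-1,lemma:tetra-property-2}) are stated at an appropriate level of detail and are fine.
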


\subparagraph*{From $\ATM$ to $\nTM$.}
Working towards a proof of~\Cref{theorem:kaexp-alternation-complexity}, we now aim at defining a~\nTM that checks if its input words represent computation paths of an alternating Turing machine.
Throughout this section, we fix~$k \in \Nat_+$ (encoded in unary), two polynomials~$f,g \colon \Nat \to \Nat$ and~$h(\avar) \egdef \tetra(k,f(\avar))$.
We consider a~$h$-time bounded and $g$-alternation bounded \ATM 
$\ATuring = (\Alphabet,\States_{\exists},\States_{\forall},\qinit,\qacc,\qrej,\trans)$.
Let $\aword \in \Alphabet^*$.
Since $\ATuring$ is \mbox{$h$-time} bounded, we can represent the configurations of~$\ATuring$ as words of the finite language $C = \{ \widehat{\aword} \in \Alphabet^* \cdot \States \cdot \Alphabet^+ \mid \card{\widehat{\aword}} =  h(\card{\aword}) \}$.
More precisely, we say that a word $\widehat{\aword} \in C$ \emph{encodes the configuration} $(\aword',\astate,k) \in \Alphabet^* \times \States \times \Nat$ 
whenever there are words~$\aword'' \in \Alphabet^*$, $\aword''' \in \Alphabet^+$
and $\aword_{\blanksymb} \in \{\blanksymb\}^*$. 
such that $\aword' \cdot \aword_{\blanksymb} = \aword'' \cdot \aword'''$, $\card{\aword''} = k$ and $\widehat{\aword} = \aword'' \cdot \astate \cdot \aword'''$.
For instance, the configuration $(\asymbol\asymbolbis, \astate, 3)$ is encoded by the word $\asymbol\asymbolbis\blanksymb\astate\blanksymb\dots\blanksymb$ of length $h(\card{\aword})$.
Each word~$\widetilde{\aword} \in C$ encodes exactly one configuration of~$\ATuring$, which we denote by~$c(\widetilde{\aword})$.

We extend our encoding to computation paths of~$\ATuring$, 
with the aim at defining a~\nTM that checks if its input words represent a computation path of~$\ATuring$. To this end, we introduce a new symbol~$\trail$ that does not appear in~$\Alphabet \cup \States$ and that is used to delimit the end of a computation path.
Let $\overline{\Alphabet} = \Alphabet \cup \States \cup \{\trail\}$  and
consider a word~$\overline{\aword} \in \overline{\Alphabet}^*$.
Again, since $\ATuring$ is $h$-time bounded, its computation paths can be encoded by a sequence of at most $h(\card{\aword})$ words from $C$ (so,~a word of length at most~$h(\card{\aword})^2$). 
Hence, let~$\widetilde{\aword} \in (\Alphabet \cup \States)^*$ be the only prefix of the word $\overline{\aword} \cdot \trail$ such that either $\widetilde{\aword} \cdot \trail$ is a prefix of $\overline{\aword} \cdot \trail$ or $\card{\widetilde{\aword}} = h(\card{\aword})^2$.
We say that $\overline{\aword}$ encodes a computation path of~$\ATuring$ if and only if  $\widetilde{\aword} = \aword_1 \cdot {\dots} \cdot \aword_{d}$, for some words $\aword_1,\dots,\aword_d \in (\Alphabet \cup \States)^*$ such that
\begin{itemize}
    \item for every $i \in [1,d]$, $\aword_i \in C$,
    \item for every $i \in [1,d-1]$, $c(\aword_{i+1}) \in \Delta(c(\aword_i))$.
\end{itemize}
Each word in $\overline{\Alphabet}^*$ encodes at most one computation path of~$\ATuring$. 
Moreover, notice that given two words~$\overline{\aword}_1,\overline{\aword}_2 \in \overline{\Alphabet}^*$,
if $\overline{\aword}_1$ has length at least $h(\card{\aword})^2$,
then $\overline{\aword}_1$ and $\overline{\aword}_1\cdot\overline{\aword}_2$ encode the same computation path (if they encode one).

We remind the reader that given a \nTM~$\Turing$ working on an alphabet~$\Alphabet$, we write 
$\Turing(\aworld_1,\dots,\aworld_n)$ for the run of the \nTM on input $(\aworld_1,\dots,\aworld_n) \in \Alphabet^n$. We extend this notation and, given $m \leq n$, write $\Turing(\aword_1,\dots,\aword_m)$ for~the run of~$\Turing$ on 
input~${(\aworld_1,\dots,\aworld_m,\epsilon,\dots,\epsilon) \in \Alphabet^n}$, 
i.e.~an input where the last $n - m$ tapes are empty.
Given two inputs $(\aworld_1,\dots,\aworld_n) \in \Alphabet^n$ and $(\aworld_1',\dots,\aworld_n') \in \Alphabet^n$
we say that $\Turing(\aworld_1,\dots,\aworld_n)$ and $\Turing(\aworld_1',\dots,\aworld_n')$ \defstyle{perform the same computational steps} if the sequence of states produced during the runs $\Turing(\aworld_1,\dots,\aworld_n)$ and $\Turing(\aworld_1',\dots,\aworld_n')$ is the same. 
In particular, given~$t \in \Nat$, 
$\Turing(\aworld_1,\dots,\aworld_n)$ accepts (resp.~rejects) in time~$t$ if and only if $\Turing(\aworld_1',\dots,\aworld_n')$ accepts (resp.~rejects) in time~$t$.

One last notion is desirable in order to prove~\Cref{theorem:kaexp-alternation-complexity}. 
We say that a computation path $\pi = (c_0,\dots,c_d)$ of $\ATuring$, is an (existential or universal) \emph{hop}, if one of the following holds:
\begin{itemize}[align=left]
    \item[\textit{(existential hop)}:] all the states in the configurations $c_0,\dots,c_{d-1}$ belong to~$\States_{\exists}$, and the state of $c_d$ belongs to $\States_{\forall} \cup \{\qacc,\qrej\}$,
    \item[\textit{(universal hop)}:] all the states in the configurations $c_0,\dots,c_{d-1}$ belong to~$\States_{\forall}$, and the state of $c_d$ belongs to $\States_{\exists} \cup \{\qacc,\qrej\}$,
\end{itemize}
Intuitively, in a hop $\pi = (c_0,\dots,c_d)$, no alternation occurs in the computation path $(c_0,\dots,c_{d-1})$ and either $\pi$ is terminating or an alternation occurs between $c_{d-1}$ and $c_d$.

We are now ready to state the essential technical lemma (\Cref{lemma:almost-theorem-kaexp-alternation}) that allows us to prove~\Cref{theorem:kaexp-alternation-complexity}. 
For simplicity, the lemma is stated referring to~$k$,~$f$,~$g$,~$h$,~$\ATuring$, $\aword$ and $\overline{\Alphabet}$ as defined above.

\begin{lemma}
    \label{lemma:almost-theorem-kaexp-alternation}
    Let $m = g(\card{\aword})$ and $n \geq m + 3$.
    One can construct in polynomial time in $n$, $k$,~$\card{\ATuring}$ and $\card{\aword}$ a~\nTM~$\Turing$ working on alphabet~$\overline{\Alphabet}$ with blank symbol~$\trail$, and such that 
    \begin{enumerate}[label=\rm{\textbf{\Roman*}}]
        \setlength{\itemsep}{3pt}
        \item\label{lem:almost-theorem-prop-1} $\Turing$ runs in polynomial time on the length of its inputs,
        \item\label{lem:almost-theorem-prop-2} given $i \in [1,m]$ and $\aword_1,\dots,\aword_i,\dots,\aword_m,\aword_i' \in \overline{\Alphabet}^*$, such that~$\card{\aword_i} \geq h(\card{\aword})^2$,
        $\Turing(\aword_1,\dots,\aword_m)$ 
        and $\Turing(\aword_1,\dots,\aword_i \cdot \aword_i', \dots \aword_m)$ perform the same computational steps,
        \item\label{lem:almost-theorem-prop-2b} for all $\aword_1,\dots,\aword_m \in \overline{\Alphabet}^*$ and $\aword_{m+1},\aword_{m+1}',\dots,\aword_n,\aword_n' \in \overline{\Alphabet}^*$, $\Turing(\aword_1,\dots,\aword_m,\aword_{m+1},\dots,\aword_n)$ 
        and $\Turing(\aword_1,\dots,\aword_m,\aword_{m+1}',\dots,\aword_n')$ perform the same computational steps,
        \item\label{lem:almost-theorem-prop-3} $\aword \in \alang(\ATuring)$ \ iff \
            $Q_1 \aworld_1 \in \overline{\Alphabet}^{h(\card{\aword})^2},\dots, Q_m \aworld_m \in \overline{\Alphabet}^{h(\card{\aword})^2}$ : 
            $\Turing(\aworld_1,\dots,\aworld_m)$ accepts,
    \end{enumerate}
    where for every $i \in [1,m]$, $Q_i = \exists$ if $i$ is odd, and otherwise $Q_i = \forall$.
\end{lemma}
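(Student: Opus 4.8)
The plan is to construct~$\Turing$ so that, run on~$(\aword_1,\dots,\aword_m)$ with the extra tapes empty, it reads each~$\aword_i$ --- up to the first~$\trail$ on that tape or up to its~$h(\card{\aword})^2$-th cell, whichever comes first --- as the encoding of the~$i$-th \emph{hop} of a terminating computation path of~$\ATuring$ on~$\aword$. Such a path exists since~$\ATuring$ is~$h$-time bounded, and, because~$\qinit\in\States_{\exists}$, every terminating computation path from~$(\aword,\qinit,0)$ splits into at most~$m$ hops of alternating type, the first existential and the last reaching~$\qacc$ or~$\qrej$ --- here the bound~$m=g(\card{\aword})$ is used. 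So~$\Turing$ checks, hop by hop, that the first hop starts at~$(\aword,\qinit,0)$, that each~$\pi_{i+1}$ starts at the last configuration of~$\pi_i$, that~$\pi_i$ is an existential hop when~$i$ is odd and a universal hop when~$i$ is even (matching~$Q_1,\dots,Q_m$), and that some~$\pi_j$ with~$j\le m$ enters a halting state; it \emph{accepts} iff all these checks pass and that halting state is~$\qacc$. The single design choice that makes the~$\exists/\forall$ alternation come out right is the verdict on a \emph{malformed} tape: if tape~$i$ is the first one (in the order~$\Turing$ visits them) that fails to encode a valid hop of the prescribed type anchored at the correct configuration, then~$\Turing$ rejects when~$i$ is odd --- the existential player has failed to substantiate acceptance --- and accepts when~$i$ is even.

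The real work is~\ref{lem:almost-theorem-prop-1}. Configurations of~$\ATuring$ have length~$h(\card{\aword})=\tetra(k,f(\card{\aword}))$ and paths up to~$h(\card{\aword})$ of them, whereas the description of~$\Turing$ must have size polynomial in~$n$,~$k$,~$\card{\ATuring}$ and~$\card{\aword}$; hence~$h(\card{\aword})$ can be neither written on a tape nor hard-coded, and I would avoid ever computing it. Instead~$\Turing$ sweeps the tapes left to right, maintaining on the work tapes~$m+1,\dots,n$ (the~$\ge 3$ extra tapes are plenty, with two fixed symbols of~$\overline{\Alphabet}$ serving as binary digits) a counter for the position inside the current configuration, a counter for the number of configurations read on the current tape, and a scratch copy of the ``frontier'' configuration needed to check the hand-off between hops and the step relation within a hop. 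Every test ``have I now read exactly~$h(\card{\aword})$ symbols of the current configuration?'' is answered by~\Cref{lemma:tetra-property-3}: for the counter value~$c$ it holds iff~$\klog{k}(c)\le f(\card{\aword})<\klog{k}(c+1)$, computable in time polynomial in~$c$ and~$k$ once the fixed number~$f(\card{\aword})$ has been recomputed from~$\card{\aword}$; and ``has the current tape already supplied~$h(\card{\aword})^2$ symbols without a~$\trail$?'' is answered by combining~\Cref{lemma:isqrt} with~\Cref{lemma:tetra-property-3}, since~$c\ge h(\card{\aword})^2$ iff~$\isqrt{c}\ge h(\card{\aword})$ and the latter is a~$\klog{k}$-comparison. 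Verifying that~$\pi_{i+1}$ follows from~$\pi_i$ is local --- two consecutive length-$h(\card{\aword})$ blocks encode configurations~$c$,~$c'$ with~$c'\in\Delta(c)$ iff they agree outside an~$O(1)$-cell window around the unique state symbol and agree there with~$\trans$ --- and is performed while the blocks stream past, using the frontier scratch tape for the previous block. All of this runs in time polynomial in the total input length, and~$\Turing$'s transition table is read off mechanically from~$k$,~$f$,~$\aword$ and the description of~$\ATuring$, giving the size bound. Property~\ref{lem:almost-theorem-prop-2} then holds because~$\Turing$ never moves a head on a tape~$i\in[1,m]$ beyond the first~$\trail$ on it nor beyond its first~$h(\card{\aword})^2$ cells, so appending~$\aword_i'$ when~$\card{\aword_i}\ge h(\card{\aword})^2$ changes nothing it reads; and property~\ref{lem:almost-theorem-prop-2b} holds because~$\Turing$ uses tapes~$m+1,\dots,n$ solely as scratch and writes every work-tape cell before inspecting it, so their initial contents are irrelevant.

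Finally,~\ref{lem:almost-theorem-prop-3} is a routine induction matching the quantifier alternation over tape contents with the~$\gamma$-based acceptance of~$\ATuring$. Ordering~$\ATuring$-hops from~$(\aword,\qinit,0)$ by depth, one proves by induction on the number of remaining hops that, for every configuration~$c$ that can start the~$i$-th hop, $\gamma(c)=\tacc$ iff~$Q_i\aword_i\cdots Q_m\aword_m$ makes~$\Turing$ accept --- on the assumption that tapes~$1,\dots,i-1$ already encode valid hops leading to~$c$. The base case is a hop entering~$\qacc$ (resp.~$\qrej$), where~$\gamma(c)=\tacc$ (resp.~$\trej$) and~$\Turing$ accepts (resp.~rejects) regardless of the remaining tapes; in the step, at an existential~$c$ the existential player can realise any hop out of~$c$, an accepting choice exists iff~$\gamma(c)=\tacc$ (by the~$\States_{\exists}$-clause of~$\gamma$ and the hop decomposition), and a malformed~$\aword_i$ forces~$\Turing$ to reject --- the universal case being dual. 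Instantiating at~$i=1$ and~$c=(\aword,\qinit,0)$, and using that~$\gamma(\aword,\qinit,0)$ is defined because~$\ATuring$ halts, gives~\ref{lem:almost-theorem-prop-3}. The one point I expect to be genuinely delicate, as opposed to merely tedious, is precisely reconciling~\ref{lem:almost-theorem-prop-1} with the~$k$-fold-exponential size of what~$\Turing$ must verify: this is exactly what~\Cref{lemma:tetra-property-3} and~\Cref{lemma:isqrt} are for, letting every comparison of a counter against~$h(\card{\aword})$ or~$h(\card{\aword})^2$ be decided by iterated logarithms and an integer square root, in time polynomial in the counter and without ever naming~$h(\card{\aword})$; the rest --- parsing, the locality check for~$\Delta$, the hop-type bookkeeping, the hand-off, and honouring the ``first~$\trail$ or~$h(\card{\aword})^2$ cells'' truncation so that~\ref{lem:almost-theorem-prop-2} and~\ref{lem:almost-theorem-prop-2b} hold --- is standard multi-tape Turing-machine programming.
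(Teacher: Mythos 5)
Your proposal is correct and takes essentially the same route as the paper's proof: the same hop-per-tape ``macrostep'' verification anchored at the previous hop's final configuration, the same asymmetric convention of rejecting malformed tapes at existential (odd) positions and accepting them at universal (even) positions, the same use of \Cref{lemma:tetra-property-3} and \Cref{lemma:isqrt} to compare lengths against $h(\card{\aword})$ and $h(\card{\aword})^2$ without ever computing $h$, and the same induction over the remaining hops for property~\ref{lem:almost-theorem-prop-3}. The only divergence is implementational (binary counters on the scratch tapes versus the paper's unary prefix-copying with pseudo-oracle subroutines), which does not change the argument.
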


\begin{proof}
    Roughly speaking, the machine~$\Turing$ we shall define checks if the input~$(\aword_1,\dots,\aword_m)$ encodes an accepting computation path in $\ATuring$, where each 
    $\aword_i$ represents an hop.

    Notice that the lemma imposes~$n \geq m + 3$, with $m = g(\card{\aword})$. This is done purely for technical convenience, as it allows us to rely on three additional tapes, i.e.~the $n$-th, $(n-1)$-th and $(n-2)$-th tapes, whose initial content does not affect the run of~$\Turing$ (see property~\eqref{lem:almost-theorem-prop-2b}).
    These three tapes are used by~$\Turing$ as auxiliary tapes.
    To simplify further the presentation of the proof without loss of generality, we assume that $\Turing$ features \emph{pseudo-oracles} that implement the standard unary functions on natural numbers~$(\,. + 1)$~$\ceil{\log(.)}$,
    $\floor{\sqrt{.}}$, as well as the function~$(\,. > f(\card{\aword}))$.
    As in the case of oracles, the machine~$\Turing$ can call a pseudo-oracle, which read the \defstyle{initial content} of the current tape of~$\Turing$ (i.e.~the word in $(\overline{\Alphabet}\setminus \{\trail\})^*$ that occurs the beginning of the tape currently scanned by the read/write head, and delimited by the first occurrence of~$\trail$) and produce the result of the function they implement at the beginning of the (auxiliary) $n$-th tape.
    However, all the functions implemented by pseudo-oracles 
    can be computed in polynomial time, and one can construct in polynomial time Turing machines that compute them. 
    More precisely, $\BigO{1}$ states are needed to implement~$(\,. + 1)$,~$\ceil{\log(.)}$ and $\floor{\sqrt{.}}$, and $\BigO{f(\card{\aword})}$ states are needed in order to implement $(\,. > f(\card{\aword}))$.%
    \footnote{Given an input written in unary, 
        the only non-standard function~$(\,. > f(\card{\aword}))$ can be implemented with a Turing machine that runs in linear time on the size of its input and relies on a chain of~$f(\card{\aword})+1$ states, plus the accepting state and rejecting state.
        At the $i$-th step of the computation, with~$i \leq f(\card{\aword})$, the read/write head is on the $i$-th symbol~$c$ of the input tape and the machine is in its $i$-th state. 
        If~$c$ is non-blank, the machine moves the read/write head to the right and switch to the $(i+1)$-th state. Otherwise, the machine rejects the input.
        On the last state (reached at step $i = f(\card{\aword})+1$, if the machine did not previously reject), if the head reads a non-blank symbol, then the machine accepts, otherwise it rejects.
    }
    So, the pseudo-oracles \emph{can be effectively removed} by incorporating their equivalent Turing machine directly inside~$\Turing$, only growing its size polynomially, and resulting in~$\Turing$ being a standard~\nTM.
    \Cref{proofT8:pseudo-oracles} formalises the semantics of the pseudo-oracles, given the initial content $c \in (\overline{\Alphabet} \setminus \{\trail\})^*$ of the current tape.
    Without loss of generality, we also assume~$\card{\aword}$ and $f(\card{\aword})$ to be at least~$1$. 
    Lastly, again in order to simplify the presentation and without loss of generality, we assume that~$\Turing$ is able to reposition the read/write head to the first position of a given tape.
    This is a standard assumption, which can be removed by simply adding a new symbol to the alphabet (historically,~\textschwa), which shall precede the inputs of the machine. Then, the machine can retrieve the first (writeable) position on the tape by moving the read/write head to the left until it reads the new symbol~\textschwa, to then move right once, without overwriting~\textschwa.

    \begin{figure}
        \flushright 
    \begin{minipage}{0.87\linewidth}
        \begin{itemize}[align=right,nosep,itemsep=5pt,before=\vspace{2pt},after=\vspace{2pt}]
            \item[$(\,. +1)$ :] write~$c \cdot \asymbol \cdot \trail$ at the beginning of the $n$-th tape, where $\asymbol \in \overline{\Alphabet} \setminus \{\trail\}$ is a fixed symbol. So, the length of the initial content on the $n$-th tape becomes $\card{c}+1$.
            \item[$\ceil{\log(.)}$ :] write $a^{\ceil{\log(\card{c})}}\cdot\trail$ at the beginning of the $n$-th tape, where $\asymbol \in \overline{\Alphabet} \setminus \{\trail\}$ is fixed.
            So, the length of the initial content of the $n$-th tape becomes~$\ceil{\log(\card{c})}$.
            \item[$\floor{\sqrt(.)}$ :] write $a^{\floor{\sqrt(\card{c})}}\cdot\trail$ at the beginning of the $n$-th tape, where $\asymbol \in \overline{\Alphabet} \setminus \{\trail\}$ is fixed.
            So, the length of the initial content of the $n$-th tape becomes~$\floor{\sqrt(\card{c})}$.
            \item[$(\,. > f(\card{\aword}))$ :] if $\card{c} > f(\card{\aword})$, write $\asymbol \cdot \trail$ at the beginning of the $n$-th tape, where $\asymbol \in \overline{\Alphabet} \setminus \{\trail\}$ is fixed. Else, write $\trail$ at the beginning of the $n$-th tape. So, the initial content of the $n$-th tape becomes empy if and only if $\card{c} \leq f(\card{\aword})$.
        \end{itemize}
    \end{minipage}
    \caption{Pseudo-oracles; $c \in (\overline{\Alphabet} \setminus \{\trail\})^*$ is the initial content of the current tape.}
    \label{proofT8:pseudo-oracles}
    \end{figure}

    Before describing~$\Turing$, we notice that property~\eqref{lem:almost-theorem-prop-2}
    requires to check whether~$\card{\aword_i} \geq h(\card{\aword})^2$. We rely on~\Cref{lemma:tetra-property-3,lemma:isqrt} in order to perform this test in polynomial time on $\card{\aword_1}$,
    without computing $h(\card{\aword})$ explicitly. 
    We have, 
    \begin{align*}
        \card{\aword_i} \geq h(\card{\aword})^2
        & \text{ iff } \ \isqrt{\card{\aword_i}} \geq h(\card{\aword}),
        & \text{ by~\Cref{lemma:isqrt}}\\[3pt]
        & \text{ iff } \ \isqrt{\card{\aword_i}} + 1 > \tetra(k,f(\card{\aword})),\\[3pt]
        & \text{ iff } \ \klog{k}(\isqrt{\card{\aword_i}} + 1) > f(\card{\aword}).
        & \text{ by~\Cref{lemma:tetra-property-3}}
    \end{align*}
    To check whether $\klog{k}(\isqrt{\card{\aword_i}} + 1) > f(\card{\aword})$ holds, 
    it is sufficient to write $\aword_i$
    on the $n$-th tape, and then call first the pseudo-oracle for $\floor{\sqrt{.}}$, followed by the pseudo-oracle for $(\,. + 1)$ and by $k$ calls to the pseudo-oracle for~$\ceil{\log{(.)}}$ (recall that $k$ is written in unary), and lastly one call to the pseudo-oracle for $(\, . > f(\card{\aword}))$.
    When taking into account the number of states needed in order to implement these pseudo-oracles, we conclude that the function $\klog{k}(\isqrt{.} + 1) > f(\card{\aword})$ can be implemented by a Turing machine with~$\BigO{k + f(\card{\aword})}$ states.
    Hence, below we assume without loss of generality that~$\Turing$ also features pseudo-oracles for the function~$\klog{k}(\isqrt{.} + 1) > f(\card{\aword})$
    as well as the (similar) functions~$\klog{k}(\,. + 1) > f(\card{\aword})$ 
    and~$\klog{k}(\,. \,) > f(\card{\aword})$,
    both requiring~$\BigO{k + f(\card{\aword})}$ states
    to be implemented. Ultimately, because of the polynomial bounds on the number of states required to implement these functions,~$\Turing$ (without pseudo-oracles) can be constructed in \ptime on~$n$,~$k$,~$\card{\aword}$ and~$\card{\ATuring}$.

    We are now ready to construct the~\nTM~$\Turing$ that works on alphabet~$\overline{\Alphabet}$. For an input ${(\aword_1,\dots,\aword_n) \in (\overline{\Alphabet}^*})^n$, 
    $\Turing$ operates following $m$ ``macrosteps'' on the first $m$ tapes, disregarding the input of the last $n-m \geq 1$ tapes. 
    At step $i \in [1,m]$, the machine operates as follows:
    \begin{description}
        \setlength{\itemsep}{3pt}
        \item[macrostep $i = 1$.] Recall that~$Q_1 = \exists$. $\Turing$ works on the first tape.
            \begin{enumerate}
                \setlength{\itemsep}{3pt}
                \item\label{turing:step1:substep1} 
                $\Turing$ reads the prefix~$\widetilde{\aword}$ of~$\aword_1$, such that if $\card{\aword_1} \leq h(\card{\aword})^2$ then $\widetilde{\aword} = \aword_1$, else $\card{\widetilde{\aword}} = h(\card{\aword})^2$. (\textit{Note: $\Turing$ does not consider the part of the input after the $h(\card{\aword})^2$-th symbol})
                \item\label{turing:step1:substep2} The \nTM~$\Turing$ then checks whether $\widetilde{\aword}$ corresponds to an \emph{existential hop} in~$\ATuring$ starting from the state~$\qinit$. 
                It does so by analysing from left to right~$\widetilde{\aword}$ in chunks of length $h(\card{\aword})$ (i.e.~the size of a word in the language~$C$ of the configurations of~$\ATuring$), possibly with the exception of the last chunk, which can be of smaller size. 
                Let $d$ be the number of chunks, so that $\widetilde{\aword} = \widetilde{\aword}_1 \cdot {\dots} \cdot \widetilde{\aword}_d$, where $\widetilde{\aword}_j$ is the $j$-th chunk analysed by~$\Turing$.
                    \begin{enumerate}[label=(\alph*)]
                        \setlength{\itemsep}{3pt}
                        \item If $\widetilde{\aword}_j \not\in C$, $\Turing$ \textbf{rejects}, (\textit{Note: else, $\widetilde{\aword}_j$ encodes the configuration $c(\widetilde{\aword}_j)$~of~$\ATuring$})
                        \item if the symbol~$\qinit$ does not occur in the first chunk $\widetilde{\aword}_1$, then $\Turing$ \textbf{rejects}, 
                        \item if $\widetilde{\aword}_j$ is not the last chunk and contains a symbol from~$\States_\forall$, then $\Turing$ \textbf{rejects},
                        \item if $\widetilde{\aword}_j$ is not the last chunk and $c(\widetilde{\aword}_{j+1}) \not \in \Delta(c(\widetilde{\aword}_j))$, then $\Turing$ \textbf{rejects}.
                        \item If the length of the last chunk~$\widetilde{\aword}_d$ is not $h(\card{\aword})$ (i.e.~the length of~$\widetilde{\aword}$ is not a multiple of~$h(\card{\aword})$), then $\Turing$ \textbf{rejects}, (\textit{Note: this case subsumes the case where~$\aword_1$ is empty})
                        \item if the last chunk $\widetilde{\aword}_d$ contains a symbol from $\States_{\exists}$, then $\Turing$ \textbf{rejects}.
                    \end{enumerate}
                \item\label{turing:step1:substep3} $\Turing$ analyses the last chunk $\widetilde{\aword}_d$ of the previous step. 
                If~$\qacc$ occurs in $\widetilde{\aword}_d$, then $\Turing$ \textbf{accepts}. 
                Otherwise, if either~$m = i = 1$ or $\widetilde{\aword}_d$ contains the symbol~$\qrej$, 
                then $\Turing$ \textbf{rejects}.
                Else,~$\Turing$ writes down $\widetilde{\aword}_d \cdot \trail$ at the beginning of the first tape, and \textbf{moves to macrostep~$2$}. Let $\aword^{(1)} = \widetilde{\aword}_d$, i.e.~the word (currently store at the beginning of the first tape) encoding the last configuration of the computation path encoded by~$\widetilde{\aword}$.
            \end{enumerate}

            From Step~\eqref{turing:step1:substep1}, we establish the following property of macrostep~1.

            \begin{claim}
                \label{claim:turing:macrostep-1-input}
                Macrostep~1 only reads at most the first $h(\card{\aword})^2$ character of the input of the first tape, and does not depend on the input written on other tapes.  
            \end{claim}

            We analyse in details the complexity of macrostep~1.

            \begin{claim}
                \label{claim:turing:macrostep-1}
                The macrostep~$1$ runs in polynomial time on~$\card{\aword_1}$,~$k$, $\card{\aword}$ and $\card{\ATuring}$,
                and only required polynomially many states to be implemented, with respect to~$k$, $\card{\aword}$ and $\card{\ATuring}$. 
            \end{claim}

            \begin{claimproof}
            We give a more fine grained description of the various steps, and analyse their complexity. 
            Recall that the tapes $n$, $n-1$ and $n-2$ are auxiliary.
            \defstyle{Step~\eqref{turing:step1:substep1}} can be performed in polynomial time in~$\card{\aword_1}$, $k$, $\card{\aword}$ and $\card{\overline{\Alphabet}} \leq \card{\ATuring}$, as shown below.
\begin{MyCode}
write ${\trail}^{\card{\aword_1}+1}$ at the beginning the $(n-1)$-th tape
while true 
    //Invariant:    The $(n-1)$-th tape contains a prefix of $\aword_1$ 
    //              of length less than $\min(\card{\aword_1},h(\card{\aword})^2)$
    let $i$ be the length of the initial content of the $(n-1)$-th tape
    read the $i$-th symbol $\asymbol$ of the first tape
    if $\asymbol = \trail$, break
    write $\asymbol$ in the $i$-th position of the $(n-1)$-th tape
    call the pseudo-oracle for $\smash{\klog{k}(\isqrt{.} + 1) > f(\card{\aword})}$ on the $(n-1)$-th tape 
    if the initial content of the $n$-th tape is not empty, break 
\end{MyCode}
            At the end of this procedure, the $(n-1)$-th tape contains the word~$\widetilde{\aword}$.
            Indeed, lines 3, 4 and 6 copy the $i$-th character of $\aword_1$ to the $(n-1)$-th tape. 
            If $\card{\aword_1} < h(\card{\aword})^2$, then
            $\widetilde{\aword} = \aword_1$ and
            after copying $\aword_1$ on the $(n-1)$-th tape, the test in line 5 becomes true. 
            Otherwise, after copying the first $h(\card{\aword})^2$ characters of $\aword_1$ to the $(n-1)$-th tape, the pseudo-oracle call of line 7 will write a non-blank symbol on the $n$-th tape, making the test in line 8 true.
            Time-wise, the complexity of the procedure above is polynomial in~$\card{\aword_1}$, $k$ (line 9), $\card{\aword}$ (line 9) and $\card{\overline{\Alphabet}} \leq \card{\ATuring}$ (lines 5--8). Space-wise, performing line 1 of the procedure requires a constant number of states. 
            Performing lines 5--8 requires $\BigO{\card{\overline{\Alphabet}}}$ states, as~$\Turing$ needs to keep track of which symbol was read on the first tape, in order to write in on the tape $(n-1)$.
            For line 9 instead, we must take into account the number of states required to implement the computation done by the pseudo-oracle directly in~$\Turing$. As stated at the beginning of the proof, these are $\BigO{k + f(\card{\aword})}$ many states. Hence, $\Turing$ can implement the procedure above with $\BigO{k + f(\card{\aword}) + \card{\ATuring}}$ many states.

            Let us now move to 
            \defstyle{Step~\eqref{turing:step1:substep2}}.
            Recall that, after 
            \defstyle{Step~\eqref{turing:step1:substep1}},~$\widetilde{\aword}$ is stored at the beginning of tape $(n-1)$.
            The machine~$\Turing$ continues as follows, where the lines marked with symbols of the form \ $\pgftextcircled{i}$ \ are later expanded and analysed in details.

\begin{MyCode}
write $\widetilde{\aword} \cdot \trail$ at the beginning of the first tape
if the initial content of the first tape is empty, $\text{\rm \textbf{reject}}$
$\pgftextcircled{1}$ if $\qinit$ does not appear in the first $h(\card{\aword})$ characters of $\widetilde{\aword}$, $\text{\rm \textbf{reject}}$
while true 
    //Invariant:   The 1st tape starts with a non-empty suffix of $\widetilde{\aword}$.
    let $i$ be the length of the initial content of the first tape
    write $\trail^{i+1}$ on both the $(n-1)$-th and $(n-2)$-th tapes
    call the pseudo-oracle for $\smash{\klog{k}(\, . + 1) > f(\card{\aword})}$ on the first tape 
    if the initial content of the $n$-th tape is empty, $\text{\rm{\textbf{rejects}}}$
    $\pgftextcircled{2}$ copy the first $h(\card{\aword})$ characters of the first tape to the $(n-1)$-th tape
    //The tape $(n-1)$ now contains a chunk $\widetilde{\aword}_j$.
    if the initial content of the $(n-1)$-th tape and the first tape is equal, break
    $\pgftextcircled{3}$ if the initial content of the $(n-1)$-th tape does not belong to $C$, $\text{\rm \textbf{reject}}$
    $\pgftextcircled{4}$ if a symbol from $\States_\forall$ occur in the initial content of the $(n-1)$-th tape, $\text{\rm \textbf{reject}}$
    $\pgftextcircled{5}$ shift the initial content of the first tape $h(\card{\aword})$ positions to the left, effectively removing the first $h(\card{\aword})$ characters from the tape 
    //The first tape now starts with the chunk $\widetilde{\aword}_{j+1}$.
    $\pgftextcircled{6}$ copy the first $h(\card{\aword})$ characters of the first tape to the $(n-2)$-th tape
    //The tape $(n-2)$ now contains the chunk $\widetilde{\aword}_{j+1}$.
    $\pgftextcircled{7}$ perform step (2d) //$\widetilde{\aword}_j$ and  $\widetilde{\aword}_{j+1}$ on tapes $(n-1)$ and $(n-2)$, respectively
//Post: The tape $(n-1)$ contains the last chunk $\widetilde{\aword}_d$.
$\pgftextcircled{8}$ if the initial content of the $(n-1)$-th tape does not belong to $C$, $\text{\rm \textbf{reject}}$
$\pgftextcircled{9}$ if a symbol from $\States_\exists$ occur in the initial content of the $(n-1)$-th tape, $\text{\rm \textbf{reject}}$
\end{MyCode}

Provided that the lines marked with symbols of the form \ $\pgftextcircled{i}$ \ can be performed in polynomial time and can be implemented with a polynomial number of states, it is straightforward to see that this procedure also runs in polynomial time and only uses polynomially many states.
Indeed, at the $j$-th iteration of the while loop, the $j$-th chunk of $\widetilde{\aword}$ is analysed. 
Therefore, the body of the while loop is executed at most $d = \ceil{\frac{\card{\widetilde{\aword}}}{h(\card{\aword})}}$ times, where $d$ is the number of chunks (which is polynomial in~$\card{\aword_1}$).
Moreover, it is easy to see that the lines 1,2, 6--9 and 12 runs in polynomial time on $\card{\aword_1}$, $k$, $\card{\aword}$ and $\card{\ATuring}$, and only requires $\BigO{k+f(\card{\aword}) + \card{\ATuring}}$ many states to be implemented.
Indeed, line 1 can be implemented in linear time on $\card{\widetilde{\aword}} \leq \card{\aword_1}$ and $\card{\overline{\Alphabet}}$, and requires $\BigO{\overline{\Alphabet}}$ states to be implemented, in order to track the symbols read on the $(n-1)$-th tape, and copy them on the first tape.
A similar analysis can be done for line 12.
Line~2 runs in time $\BigO{1}$ and requires $\BigO{1}$ states.
Line~7 runs in time $\BigO{\card{\widetilde{\aword}}}$ and requires $\BigO{1}$ states.
Line~8 is analogous to line~9 of Step~\eqref{turing:step1:substep1}, and checks whether the current suffix of~$\widetilde{\aword}$ stored in the first tape contains at least $h(\card{\aword})$ characters. 
Together with line \ $\pgftextcircled{5}$ \,, 
this line assures that $\card{\widetilde{\aword}}$ is a multiple of~$h(\card{\aword})$, effectively implementing the step (2e).

Let us now focus on the lines marked with \ $\pgftextcircled{i}$ \,. Line \ $\pgftextcircled{1}$ \,, implements the step~(2b). To test whether $h(\card{\aword})$ character have been read, we can rely on 
the pseudo-oracle for $\smash{\klog{k}(\, . + 1) > f(\card{\aword})}$, similary to what it is done in line 7. Here is the resulting procedure.
\begin{MyCode}
write $\trail^{\card{\widetilde{\aword}}+1}$ on the $(n-1)$-th tape
while true 
    //Invariant:    The $(n-1)$-th tape contains a prefix of $\widetilde{\aword}$ 
    //              of length less than $\min(\card{\widetilde{\aword}},h(\card{\aword}))$
    let $i$ be the length of the initial content of the $(n-1)$-th tape
    read the $i$-th symbol $\asymbol$ of the first tape
    if $\asymbol = \qacc$, break
    if $\asymbol = \trail$, $\text{\rm\textbf{reject}}$
    write $\asymbol$ in the $i$-th position of the $(n-1)$-th tape
    call the pseudo-oracle for $\smash{\klog{k}(\,. + 1) > f(\card{\aword})}$ on the $(n-1)$-th tape 
    if the initial content of the $n$-th tape is not empty, $\text{\rm\textbf{reject}}$ 
//Post: $\qacc$ appears in the first $h(\card{\aword})$ characters of $\widetilde{\aword}$
\end{MyCode}
With a similar analysis to what done in Step~\eqref{turing:step1:substep1}, we conclude that \ $\pgftextcircled{1}$ \, runs in polynomial time on $\card{\aword_1}$, $k$, $\card{\aword}$ and $\card{\ATuring}$, and requires $\BigO{k+f(\card{\aword}) + \card{\ATuring}}$ states to be implemented.

Line~~$\pgftextcircled{2}$ \ is very similar to line \ $\pgftextcircled{1}$\,. Notice that form line 6 of Step~\eqref{turing:step1:substep2}, the initial content of the tape $(n-1)$ is empty. Moreover, from line 7, the initial content of the first tape has at least $h(\card{\aword})$ characters.
Here is the procedure for~~$\pgftextcircled{2}$\,. 
\begin{MyCode}
while true 
    //Invariant:    The $(n-1)$-th tape contains a prefix of the initial content 
    //              of the first tape, of length less than $h(\card{\aword})$
    let $i$ be the length of the initial content of the $(n-1)$-th tape
    read the $i$-th symbol $\asymbol$ of the first tape
    write $\asymbol$ in the $i$-th position of the $(n-1)$-th tape
    call the pseudo-oracle for $\smash{\klog{k}(\,. + 1) > f(\card{\aword})}$ on the $(n-1)$-th tape 
    if the initial content of the $n$-th tape is not empty, break
//Post: the initial content of the $(n-1)$-th tape is a prefix of the 
//       initial content of the first tape, of length $h(\card{\aword})$
\end{MyCode}
As in the case of line~~$\pgftextcircled{1}$\,,
this procedure runs in polynomial time on $\card{\aword_1}$, $k$ $\card{\aword}$ and $\card{\ATuring}$, 
and requires $\BigO{k+f(\card{\aword}) + \card{\ATuring}}$ states to be implemented.
After executing line~~$\pgftextcircled{1}$\,, the initial content of the tape $(n-1)$ is a chunk of $\widetilde{\aword}$, say $\widetilde{\aword}_j$. 

Lines \ $\pgftextcircled{3}$ \ and \ $\pgftextcircled{4}$ \ 
implement the steps (2a) and (2c).
The machine~$\Turing$ can implement both lines simultaneously,
by iterating through the initial content of the $(n-1)$-th tape as shown below.
\begin{MyCode}
move the read/write head to the first position of the $(n-1)$-th tape
while true 
    let $\asymbol$ be the symbol corresponding to the read/write head
    if $\asymbol = \trail$, $\text{\rm{\textbf{reject}}}$ 
    if $\asymbol \in \States_{\forall}$, $\text{\rm{\textbf{reject}}}$
    if $\asymbol \in \States \setminus \States_{\forall}$, break
    move the read/write head to the right
//Post: a symbol from $\States \setminus \States_{\forall}$ was found
move the read/write head to the right 
if the the read/write head reads $\trail$, $\text{\rm{\textbf{reject}}}$
while the read/write head does not read $\trail$
    if the read/write had reads a symbol from $\States$, $\text{\rm{\textbf{reject}}}$ 
    move the read/write head to the right
//Post: the initial content of the $(n-1)$-th tape belongs to $C$ and does not 
//       contain symbols from $\States_{\forall}$
\end{MyCode}
The correctness of this procedure with respect to the description of~lines~~$\pgftextcircled{3}$ \, and~~$\pgftextcircled{4}$ \, is straightforward as we recall that $C = \{ \widehat{\aword} \in \Alphabet^* \cdot \States \cdot \Alphabet^+ \mid \card{\widehat{\aword}} =  h(\card{\aword}) \}$, and by line~~$\pgftextcircled{1}$ \, the initial content of the tape $(n-1)$ has length $h(\card{\aword})$.
The running time of the procedure above is polynomial in the length of the initial content of the $(n-1)$-th tape, and thus polynomial on $\card{\aword_1}$. 
The number of states required to implement this procedure is in~$\BigO{1}$.

Let us move to line~~$\pgftextcircled{5}$\,. Essentially, in this lines~$\Turing$ removes from the initial content of the first tape the chunk that is currently analysed and stored in the tape $(n-1)$, so that the initial content of the first tape starts with the next chunk. A possible implementation of this line is given below. Recall that, from line 6 of Step~\eqref{turing:step1:substep2}, the $(n-2)$-th tape starts with the word $\trail^{i+1}$, where $i$ is the length of the initial content of the first tape. 
Moreover, from line 12 of Step~\eqref{turing:step1:substep2}, 
the length of the initial content of the first tape is greater than the length of the initial content of the tape $(n-1)$.
\begin{MyCode}
move the read/write head to the first occurrence of $\trail$ on the $(n-1)$-th tape 
//The read/write head is now on the position $\card{\widetilde{\aword}_j}$ of the $(n-1)$-th tape
move the read/write head to the first tape 
//Read/write head currently in position $\card{\widetilde{\aword}_j}$ of the first tape
//From line 12 of Step (2), the head reads a symbol different from $\trail$
while true
    let $\asymbol$ be the symbol corresponding to the read/write head 
    write $\trail$
    move the read/write head to the right 
    if the read/write head reads $\trail$,
        move the read/write head to the first occurrence of $\trail$ on the $(n-2)$-th tape
        write $\asymbol$
        break
    else 
        move the read/write head to the first occurrence of $\trail$ on the $(n-2)$-th tape
        write $\asymbol$
        move the read/write head to the first occurrence of $\trail$ on the first tape
        //read/write head in the position where~$\asymbol$ was previously stored
        move the read/write head right
//Post: The initial content of the tape $(n-2)$ is the word required by line $\pgftextcircled{5}$
let $\widetilde{\aword}'$ be the initial content of the $(n-1)$-th tape
write $\widetilde{\aword}' \cdot \trail$ at the beginning of the first tape
\end{MyCode}
Again, this procedure runs in polynomial time on~$\card{\widetilde{\aword}}$ and $\card{\ATuring}$. Since~$\Turing$ needs to keep track of the symbol read in line 7, as well as implementing lines~21--22, the number of states required to implement the procedure is $\BigO{\overline{\Alphabet}}$, and thus polynomial in $\card{\ATuring}$.

Line~~$\pgftextcircled{6}$ \ copies the chunk $\widetilde{\aword}_{j+1}$ to the tape $(n-2)$, so that $\Turing$ can then perform the step (2d) (line~~$\pgftextcircled{7}$\,). Line~~$\pgftextcircled{6}$ \ is implemented analogously to line~~$\pgftextcircled{2}$\,, i.e. the line that copied the chunk $\widetilde{\aword}_j$ to the $(n-1)$-th tape.
Compared to the code for line~~$\pgftextcircled{2}$\,, it is sufficient to replace the steps done to the tape $(n-1)$ to equivalent steps done on the $(n-2)$ step in order to obtain the code for line~~$\pgftextcircled{6}$\,, which therefore runs in polynomial time, and requires $\BigO{k + f(\card{\aword}) + \card{\ATuring}}$ many sstates in order to be implemented.

Let us now assume that the initial content of the $(n-1)$-th step is the chunk $\widetilde{\aword}_j$, and that the initial content of tape $(n-2)$ is the chunk $\widetilde{\aword}_{j+1}$.
According to line~~$\pgftextcircled{7}$\,, we must check whether $c(\widetilde{\aword}_{j+1}) \in \Delta(c(\widetilde{\aword}_j))$ (step (2d)). Notice that we do now know whether $\widetilde{\aword}_{j+1} \in C$, as this test will be performed at the next iteration of the while loop of Step~\eqref{turing:step1:substep2}. 
However, we can still easily implement step (2d) as follows: 
we find the pair $(\astate,\asymbol) \in \States \times \Alphabet$ such that $\aword' \cdot \astate \cdot \asymbol \cdot \aword'' = \widetilde{\aword}_j$ for some words $\aword'$ and $\aword''$. This decomposition is guaranteed from $\widetilde{\aword}_j \in C$, which holds from line~~$\pgftextcircled{3}$\,.
The machine then iterates over all elements of $\delta(\astate,\asymbol)$, updating $\widetilde{\aword}_j$ accordingly, on the $n$-th tape. After each update, $\Turing$~checks whether the initial content of the $n$-th tape corresponds to $\widetilde{\aword}_{j+1}$. Here is the procedure.

\begin{MyCode}
let $\widetilde{\aword}_j$ be the initial content of the $(n-1)$-th tape
let $\widetilde{\aword}_{j+1}$ be the initial content of the $(n-2)$-th tape
move the read/write head to the beginning of the tape $(n-1)$
while the read/write had reads a character not from $\States$
    move the read/write head to the right 
//Since $\widetilde{\aword}_j \in C$, the loop above terminates
//Post: the read/write head reads a character from $\States$
let $\astate$ be the symbol corresponding to the read/write head
move the read/write head to the right 
let $\asymbol$ be the symbol corresponding to the read/write head
//Since $\widetilde{\aword}_j \in C$, $\asymbol$ belongs to $\Alphabet$
for $(\astate',\asymbol',i) \in \States \times \Alphabet \times \{-1,+1\}$ belonging to $\delta(\astate,\asymbol)$
    write $\widetilde{\aword}_j \cdot \trail$ at the beginning of the $n$-th tape
    move to the read/write tape to the (only) occurrence of $s$ on the $n$-th tape
    if $i = 1$,
        write $\asymbol'$ //overwrites $\astate$
        move the read/write tape to the right
        write $\asymbol'$ //overwrites $\asymbol$
    else if $s$ occurs at the beginning of the $n$-th tape,
        write $\qrej$
    else
        move the read/write tape left
        let $\asymbolbis$ be the symbol corresponding to the read/write head
        write $\astate'$ //overwrites $\asymbolbis$
        move the read/write tape to the right
        write $\asymbolbis$ //overwrites $\astate$
        move the read/write tape to the right 
        write $\asymbol'$ // overwrites $\asymbol$
    if the initial content of the $n$-th tape equals $\widetilde{\aword}_{j+1}$, goto line 32
//Post : $c(\widetilde{\aword}_{j+1}) \not\in \Delta(c(\widetilde{\aword}_j))$ 
$\text{\rm{\textbf{reject}}}$
//If this line is reached, $c(\widetilde{\aword}_{j+1}) \in \Delta(c(\widetilde{\aword}_j))$
\end{MyCode}
Briefly, if the test in line 14 holds, then $\widetilde{\aword}_j = \aword' \cdot \astate \cdot \asymbol \cdot \aword''$ is copied on the $n$-th tape and updated to $\aword' \cdot \asymbol' \cdot \astate' \cdot \aword''$, according to the semantics of the ATM~$\ATuring$ for the case $(\astate',\asymbol',1) \in \delta(\astate,\asymbol)$.
If instead the test in line 19 holds, then $\widetilde{\aword}_j$
is of the form $\astate \cdot \asymbol \cdot \aword''$ we are considering a triple $(\astate',\asymbol',-1) \in \delta(\astate,\asymbol)$. According to the semantics of the $\ATM$ $\widetilde{\aword}_j$ must be updated to $\qrej \cdot a \cdot \aword''$, as done in line 20.
Lastly, if the else branch in line 21 is reached, then $\widetilde{\aword}_j$ is of the form $\aword' \cdot \asymbolbis \cdot \astate \cdot \asymbol \cdot \aword''$, and since we are considering a triple $(\astate',\asymbol',-1) \in \delta(\astate,\asymbol)$, it must be updated to $\aword' \cdot \astate' \cdot \asymbolbis \cdot \asymbol' \cdot \aword''$, as done in lines 22--28.
This procedure runs in polynomial time on~$\card{\widetilde{\aword}_j} \leq \card{\aword_1}$ and $\card{\ATuring}$, and because of the \defstyle{for} loop in line 12 together with the necessity to copy $\widetilde{\aword}_j$ to the $n$-th tape and keeping track of the symbols $\astate$, $\asymbol$ and $\asymbolbis$, it requires a number of states that is polynomial in $\card{\ATuring}$.

Lastly, the lines~~$\pgftextcircled{8}$ \ and~~$\pgftextcircled{9}$ \ are analogous to the lines~~$\pgftextcircled{3}$ \ and~~$\pgftextcircled{4}$\, (the only difference being that~$\States_{\forall}$ is replaced by~$\States_{\exists}$ in order to correctly implement~~$\pgftextcircled{9}$ \ and so (2f)). 
We conclude that Step~\eqref{turing:step1:substep2} of the procedure runs in polynomial time on $\card{\aword_1}$, $k$, $\card{\aword}$ and $\card{\ATuring}$.
and requires a polynomial number of states with respect to $k$, $\card{\aword}$ and $\card{\ATuring}$.  

At the end of Step~\eqref{turing:step1:substep2},
the initial content of the tape $(n-1)$ corresponds to the last chunk~$\widetilde{\aword}_d$ of $\widetilde{\aword}$.  
Then, the step~\eqref{turing:step1:substep3} performed by~$\Turing$ is implemented as follows.
\begin{MyCode}
let $\widetilde{\aword}_d$ be the initial content of the $(n-1)$-th tape
if $\qacc$ appears in $\widetilde{\aword}_d$, $\text{\rm\textbf{accept}}$
if $\qrej$ appears in $\widetilde{\aword}_d$ , $\text{\rm\textbf{reject}}$
if $m = 1$, $\text{\rm \textbf{reject}}$
write $\widetilde{\aword}_d \cdot \trail$ at the beginning of the first tape //see $\aword^{(1)}$
move to macrostep $2$
\end{MyCode}
        Clearly, this step runs in polynomial time on $\card{\widetilde{\aword}_d} \leq \card{\aword_1}$ and $\card{\overline{\Alphabet}} \leq \card{\ATuring}$, and requires $\BigO{\card{\overline{\Alphabet}}}$ many states to be implemented (see line~5).
        Considering all the steps, we conclude that macrostep~$1$ runs in polynomial time on~$\card{\aword_1}$,~$k$, $\card{\aword}$ and $\card{\ATuring}$,
        and requires polynomially many states to be implemented, w.r.t.~$k$, $\card{\aword}$ and $\card{\ATuring}$. 
        \end{claimproof}

            The semantics of~$\Turing$ during macrostep~$1$ is summarised in the following claim.

            \begin{claim}
                \label{claim:turing:macrostep-1-semantics}
                Let $\widetilde{\aword}$ prefix~$\widetilde{\aword}$ of~$\aword_1 \cdot \trail$, 
                s.t.~either $\widetilde{\aword} \cdot \trail$ is a prefix of $\aword_1 \cdot \trail$ or $\card{\widetilde{\aword}} = h(\card{\aword})^2$.
                \begin{itemize}
                    \item if $\widetilde{\aword}$ encodes an existential hop of $\ATuring$ starting on state $\qinit$ and ending in~$\qacc$,
                    $\Turing$ accepts,
                    \item else, if $1 < m$ and $\widetilde{\aword}$ encodes an existential hop of $\ATuring$, starting on state $\qinit$ and ending in a state from~$\States_{\forall}$,
                    then $\Turing$ writes the last configuration of this path at the beginning of the first tape (i.e.~the word~$\aword^{(1)}$) and moves to macrostep 2,
                    \item otherwise, $\Turing$ rejects.
                \end{itemize}
            \end{claim}

        \item[macrostep: $i > 1$, $i$ odd.] Recall that~$Q_i = \exists$. $\Turing$ works on the $i$-th tape and on the word $\aword^{(i-1)}$ written in on the $(i-1)$-th tape at the end of step $i-1$. This macrostep resembles the first one.
        \begin{enumerate}
            \setlength{\itemsep}{3pt}
            \item\label{turing:step2:substep1} 
            $\Turing$ reads the prefix~$\widetilde{\aword}$ of~$\aword_i$, such that if $\card{\aword_i} \leq h(\card{\aword})^2$ then $\widetilde{\aword} = \aword_i$, else $\card{\widetilde{\aword}} = h(\card{\aword})^2$. (\textit{Note: $\Turing$ does not consider the part of the input after the $h(\card{\aword})^2$-th symbol})
            \item\label{turing:step2:substep2} The \nTM~$\Turing$ then checks whether $\aword^{(i-1)} \cdot \widetilde{\aword}$ encodes an existential hop in~$\ATuring$.
            It does so by analysing (from left to right)~$\widetilde{\aword}$ in chunks of length $h(\card{\aword})$, possibly with the exception of the last chunk, which can be of smaller size. 
            Let $d$ be the number of chunks, and let $\widetilde{\aword}_j$ be the $j$-th chunk analysed by $\Turing$.
                \begin{enumerate}[label=(\alph*)]
                    \setlength{\itemsep}{3pt}
                    \item If $\widetilde{\aword}_j \not\in C$, $\Turing$ \textbf{rejects},
                    \item if the first chunk $\widetilde{\aword}_1$ is such that $c(\widetilde{\aword}_1) \not\in \Delta(c(\aword^{(i-1)}))$, then $\Turing$ \textbf{rejects}, 
                    \item if $\widetilde{\aword}_j$ is not the last chunk and contains a symbol from~$\States_\forall$, then $\Turing$ \textbf{rejects},
                    \item if $\widetilde{\aword}_j$ is not the last chunk and $c(\widetilde{\aword}_{j+1}) \not \in \Delta(c(\widetilde{\aword}_j))$, then $\Turing$ \textbf{rejects},
                    \item if the length of the last chunk~$\widetilde{\aword}_d$ is not $h(\card{\aword})$ (i.e.~the length of~$\widetilde{\aword}$ is not a multiple of~$h(\card{\aword})$), then $\Turing$ \textbf{rejects},
                    \item if the last chunk $\widetilde{\aword}_d$ contains a symbol from~$\States_\exists$, then $\Turing$ \textbf{rejects}.
                \end{enumerate}
            \item\label{turing:step2:substep3} $\Turing$ analyses the last chunk $\widetilde{\aword}_d$ of the previous step. 
            If~$\qacc$ occurs in $\widetilde{\aword}_d$, then $\Turing$ \textbf{accepts}. 
            Otherwise, if $i = m$ or $\qrej$ occurs in~$\widetilde{\aword}_d$, then $\Turing$ \textbf{rejects}.
            Else,~$\Turing$ writes down $\widetilde{\aword}_d \cdot \trail$ at the beginning of the $i$-th tape, and \textbf{moves to macrostep~$(i+1)$}. Let $\aword^{(i)} = \widetilde{\aword}_d$.
        \end{enumerate}

        From macrostep~$(i-1)$, 
        the word~$\aword^{(i-1)}$ corresponds to the initial content of the $(i-1)$-th tape, and its length is bounded by $\card{\aword_{i-1}}$.
        Moreover,
        $\aword^{(i-1)} \in C$.
        The step (2b) can be implemented as line~~$\pgftextcircled{7}$ \, in the proof of~\Cref{claim:turing:macrostep-1}.
        The analysis of this macrostep follows essentially the same arguments given for the macrostep~$1$, allowing us to establish the following claims.

        \begin{claim}
            \label{claim:turing:macrostep-i-odd-input}
            Let $i > 1$ odd.
            Macrostep~$i$ only reads at most the first $h(\card{\aword})^2$ characters of the input of the $i$-th tape, and at most the first $h(\card{\aword})$ characters of the $(i-1)$-th tape.
            It does not depend on the content written on the other tapes.
        \end{claim}

        \begin{claim}
            \label{claim:turing:macrostep-i-odd}
            Let $i > 1$ odd. The macrostep~$i$ runs in polynomial time on~$\card{\aword_1}$,~$k$, $\card{\aword}$ and $\card{\ATuring}$,
            and required polynomially many states to be implemented, w.r.t.~$k$, $\card{\aword}$ and $\card{\ATuring}$. 
        \end{claim}

        \begin{claim}
            \label{claim:turing:macrostep-i-odd-semantics}
            Let $i > 1$ odd, and let $\widetilde{\aword}$ prefix~of~$\aword_i \cdot \trail$, 
            such that~either $\widetilde{\aword} \cdot \trail$ is a prefix of $\aword_i \cdot \trail$ or $\card{\widetilde{\aword}} = h(\card{\aword})^2$.
            Suppose that~$\Turing$ did not halt on the macrosteps~$1,\dots,i-1$.
            \begin{itemize}
                \item if $\aword^{(i-1)} \cdot \widetilde{\aword}$ encodes an existential hop of $\ATuring$ ending in state $\qacc$,
                then $\Turing$ accepts,
                \item else, if $i < m$ and $\aword^{(i-1)} \cdot \widetilde{\aword}$ encodes an existential hop of $\ATuring$, ending in a state from~$\States_\forall$,
                then $\Turing$ writes the last configuration of this path (i.e.~$\aword^{(i)}$) at the beginning of the $i$-th tape and moves to macrostep $(i+1)$,
                \item otherwise, $\Turing$ rejects.
            \end{itemize}
        \end{claim}

        \item[macrostep: $i > 1$, $i$ even.] Recall that~$Q_i = \forall$. $\Turing$ works on the $i$-th tape and on the word $\aword^{(i-1)}$ written in on the $(i-1)$-th tape at the end of step $i-1$.
        \begin{enumerate}
            \setlength{\itemsep}{3pt}
            \item\label{turing:step3:substep1}
            $\Turing$ reads the prefix~$\widetilde{\aword}$ of~$\aword_i$, such that if $\card{\aword_i} \leq h(\card{\aword})^2$ then $\widetilde{\aword} = \aword_i$, else $\card{\widetilde{\aword}} = h(\card{\aword})^2$. (\textit{Note: $\Turing$ does not consider the part of the input after the $h(\card{\aword})^2$-th symbol})
            \item\label{turing:step3:substep2} The \nTM~$\Turing$ works ``dually'' with respect to the macrosteps where $i$ is odd. 
            It checks whether $\aword^{(i-1)} \cdot \widetilde{\aword}$ corresponds to an universal hop of~$\ATuring$.
            However, differently from the case where $i$ is odd, if $\widetilde{\aword}$ does not encode such a computation path, then $\Turing$ accepts (as in this case the input is not ``well-formed'' and should not be considered for the satisfaction of the quantifier~$Q_i = \forall$).
            As in the other macrosteps, $\Turing$ perform this step by analysing (from left to right)~$\widetilde{\aword}$ in chunks of length $h(\card{\aword})$, possibly with the exception of the last chunk, which can be of smaller size. 
            Let $d$ be the number of chunks, and let $\widetilde{\aword}_j$ be the $j$-th chunk analysed 
            by~$\Turing$.
            \begin{enumerate}[label=(\alph*)]
                \setlength{\itemsep}{3pt}
                \item If $\widetilde{\aword}_j \not\in C$, $\Turing$ \textbf{accepts},
                \item if the first chunk $\widetilde{\aword}_1$ is such that $c(\widetilde{\aword}_1) \not\in \Delta(c(\aword^{(i-1)}))$, then $\Turing$ \textbf{accepts}, 
                \item if $\widetilde{\aword}_j$ is not the last chunk and contains a symbol from~$\States_\exists$, then $\Turing$ \textbf{accepts},
                \item if $\widetilde{\aword}_j$ is not the last chunk and $c(\widetilde{\aword}_{j+1}) \not \in \Delta(c(\widetilde{\aword}_j))$, then $\Turing$ \textbf{accepts},
                \item if the length of the last chunk~$\widetilde{\aword}_d$ is not $h(\card{\aword})$ (i.e.~the length of~$\widetilde{\aword}$ is not a multiple of~$h(\card{\aword})$), then $\Turing$ \textbf{accepts},
                \item if the last chunk $\widetilde{\aword}_d$ contains a symbol from~$\States_\forall$, then $\Turing$ \textbf{accepts}.
            \end{enumerate}
            \item\label{turing:step3:substep3} $\Turing$ analyses the last chunk $\widetilde{\aword}_d$ of the previous step. 
            If~$\qacc$ occurs in $\widetilde{\aword}_d$, then $\Turing$ \textbf{accepts}. 
            If~$\qrej$ occurs in~$\widetilde{\aword}_d$ or $i = m$, then $\Turing$ \textbf{rejects}.
            Else,~$\Turing$ writes down $\widetilde{\aword}_d \cdot \trail$ at the beginning of the $i$-th tape, and \textbf{moves to macrostep~$(i+1)$}. Let $\aword^{(i)} = \widetilde{\aword}_d$.
        \end{enumerate}
        
        Following the same arguments of the first macrostep, the following claims are established. 

        \begin{claim}
            \label{claim:turing:macrostep-i-even-input}
            Let $i > 1$ even.
            Macrostep~$i$ only reads at most the first $h(\card{\aword})^2$ characters of the input of the $i$-th tape, and at most the first $h(\card{\aword})$ characters of the $(i-1)$-th tape.
            It does not depend on the content written on the other tapes.
        \end{claim}

        \begin{claim}
            \label{claim:turing:macrostep-i-even}
            Let $i > 1$ even. The macrostep~$i$ runs in polynomial time on~$\card{\aword_1}$,~$k$, $\card{\aword}$ and $\card{\ATuring}$,
            and required polynomially many states to be implemented, w.r.t.~$k$, $\card{\aword}$ and $\card{\ATuring}$. 
        \end{claim}

        \begin{claim}
            \label{claim:turing:macrostep-i-even-semantics}
            Let $i > 1$ even, and let $\widetilde{\aword}$ prefix~of~$\aword_i \cdot \trail$, 
            such that~either $\widetilde{\aword} \cdot \trail$ is a prefix of $\aword_i \cdot \trail$ or $\card{\widetilde{\aword}} = h(\card{\aword})^2$.
            Suppose that~$\Turing$ did not halt on the macrosteps~$1,\dots,i-1$.
            \begin{itemize}
                \item if $\aword^{(i-1)} \cdot \widetilde{\aword}$ encodes an universal hop of $\ATuring$ ending in state $\qrej$, then $\Turing$ rejects,
                \item else, if $i = m$ and $\aword^{(i-1)} \cdot \widetilde{\aword}$ encodes an universal hop of $\ATuring$, ending in a state from~$\States_\exists$, then $\Turing$ rejects,
                \item else, if $i < m$ and $\widetilde{\aword}$ encodes an universal hop of $\ATuring$, ending in a state from~$\States_\exists$,
                then $\Turing$ writes the last configuration of this path (i.e.~$\aword^{(i)}$) at the beginning of the $i$-th tape and moves to macrostep $(i+1)$,
                \item otherwise, $\Turing$ accepts.
            \end{itemize}
        \end{claim}
    \end{description}
    
    This ends the definition of~$\Turing$. Since $\Turing$ performs $m = g(\card{\aword})$ macrosteps, where $g$ is a polynomial, from~\Cref{claim:turing:macrostep-1}, \Cref{claim:turing:macrostep-i-odd} and~\Cref{claim:turing:macrostep-i-even} we conclude that 
    $\Turing$ can be constructed in polynomial time in $n > m$ (number of tapes), $k$, $\card{\ATuring}$ and $\card{\aword}$. Moreover, $\Turing$ runs in polynomial time, as required by property~\eqref{lem:almost-theorem-prop-1}.
    By~\Cref{claim:turing:macrostep-1-input},~\Cref{claim:turing:macrostep-i-odd-input} and~\Cref{claim:turing:macrostep-i-even-input}, $\Turing$ only looks at a prefix of its inputs of length at most $h(\card{\aword})^2$, and its runs are independent on the input of the last $n-m$ tapes. 
    Therefore, both the properties~\eqref{lem:almost-theorem-prop-2} and~\eqref{lem:almost-theorem-prop-2b} are satisfied. 
    
    To conclude the proof, it remains to show that~$\Turing$ satisfies property~\eqref{lem:almost-theorem-prop-3}, which 
    follows directly from the following claim, proved thanks to~\Cref{claim:turing:macrostep-1-semantics},~\Cref{claim:turing:macrostep-i-odd-semantics}
    and~\Cref{claim:turing:macrostep-i-even-semantics}.
    \phantom\qedhere
    \end{proof}

    \begin{claim}
        Let $\ell \in [1,m]$ and 
        $(\pi_1,\dots,\pi_\ell)$ be a $\ell$-uple such that, 
        \begin{itemize}[nosep]
            \item $\pi_1 \cdot {\dots} \cdot \pi_\ell$ is a computation path of~$\ATuring$,
            \item for every $i \in [1,\ell]$ odd, $\pi_i = (c_0^i,\dots,c_{d_i}^i)$ is an existential hop,
            \item for every $i \in [1,\ell]$ even, $\pi_i = (c_0^i,\dots,c_{d_i}^i)$ is an universal hop,
            \item $c_0^1 = (\aword,\qinit,0)$.
        \end{itemize}
        For every $i \in [1,\ell]$, let $\aword_i \in \overline{\Alphabet}^{h(\card{\aword})^2}$ be a word encoding the computation path $\pi_i$.
        Then, 
        \begin{equation}
            \label{LastATMClaim}
            \gamma(c^\ell_{d_\ell}) = \tacc
            \text{ iff }
            Q_{\ell+1}\aword_{\ell+1} \in \overline{\Alphabet}^{h(\card{\aword})^2},\dots,Q_m \aworld_m \in \overline{\Alphabet}^{h(\card{\aword})^2} \text{ : }
            \Turing(\aworld_1,\dots,\aworld_m) \text{ accepts},
        \end{equation}
        where for every $i \in [\ell+1,m]$, $Q_i = \exists$ if $i$ is odd, and otherwise $Q_i = \forall$.
    \end{claim}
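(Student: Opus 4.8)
The plan is to prove the claim by induction on $m - \ell$, that is, by downward induction on $\ell$: the base case is $\ell = m$, and the inductive step derives the statement for $\ell$ from the statement for $\ell+1$. The only ingredients are the three ``semantics'' claims (\Cref{claim:turing:macrostep-1-semantics}, \Cref{claim:turing:macrostep-i-odd-semantics} and \Cref{claim:turing:macrostep-i-even-semantics}), which describe the behaviour of macrostep $i$ of $\Turing$ purely as a function of the prefix of tape $i$ delimited by $\trail$ (or truncated at length $h(\card{\aword})^2$) and of the word $\aword^{(i-1)}$ produced by the previous macrostep, together with the independence properties \eqref{lem:almost-theorem-prop-2} and \eqref{lem:almost-theorem-prop-2b}. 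First I would record a bookkeeping observation reused everywhere: if $\aword_i$ encodes $\pi_i$ for every $i \in [1,\ell]$ (such words of length exactly $h(\card{\aword})^2$ exist by padding, and the closure remark on encodings guarantees that the prefix actually read by macrostep $i$ still encodes $\pi_i$) and $\aword_{\ell+1},\dots,\aword_m$ are arbitrary, then in the run $\Turing(\aword_1,\dots,\aword_m)$ macrosteps $1,\dots,\ell$ process exactly $\pi_1,\dots,\pi_\ell$; for $i < \ell$, $\pi_i$ is an interior block of the computation path $\pi_1\cdots\pi_\ell$, hence ends in a non-terminal state of the opposite quantifier type, so macrostep $i$ neither accepts nor rejects and $\Turing$ proceeds to macrostep $i+1$ (in the degenerate case where some interior $\pi_i$ already ends in a terminal state, $c^\ell_{d_\ell}$ inherits that state, $\Turing$ halts accordingly at macrostep $i$, and both sides of \eqref{LastATMClaim} trivially agree). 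In particular $\Turing$ reaches macrostep $\ell$ with $\aword^{(\ell-1)}$ encoding $c_0^\ell$.

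For the base case $\ell = m$ the quantifier prefix in \eqref{LastATMClaim} is empty, so one must show $\gamma(c^m_{d_m}) = \tacc$ iff $\Turing(\aword_1,\dots,\aword_m)$ accepts. By the observation, macrostep $m$ is reached and processes $\pi_m$; by the relevant semantics claim $\Turing$ accepts iff the last state of $\pi_m$ is $\qacc$, and otherwise rejects (either because that state is $\qrej$, or because $\ell = m$ and it is of the opposite quantifier type). Since $\gamma$ of a $\qacc$- (resp.\ $\qrej$-)configuration is $\tacc$ (resp.\ $\trej$) by definition, the equivalence follows once $c^m_{d_m}$ is a terminal configuration, which — given that $m = g(\card{\aword})$ hops have been consumed and $\ATuring$ is $g$-alternation bounded — is the only case that matters.

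For the inductive step, assume the statement for $\ell+1$ and fix $(\pi_1,\dots,\pi_\ell)$ with encodings $\aword_1,\dots,\aword_\ell$. The parity constraints force $\pi_\ell$ to be a universal hop when $\ell+1$ is odd (so $c^\ell_{d_\ell}$ carries an existential state unless it is terminal) and an existential hop when $\ell+1$ is even. Consider the odd case, where $Q_{\ell+1} = \exists$. If the last state of $\pi_\ell$ is $\qacc$ (resp.\ $\qrej$), macrostep $\ell$ already makes $\Turing$ accept (resp.\ reject) regardless of $\aword_{\ell+1},\dots,\aword_m$, matching $\gamma(c^\ell_{d_\ell}) = \tacc$ (resp.\ $\gamma(c^\ell_{d_\ell}) \neq \tacc$). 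Otherwise $c^\ell_{d_\ell}$ has an existential state, so $\gamma(c^\ell_{d_\ell}) = \tacc$ iff $\gamma(c') = \tacc$ for some $c' \in \Delta(c^\ell_{d_\ell})$, and I would show this to be equivalent to the existence of an existential hop $\pi_{\ell+1}$ starting at $c^\ell_{d_\ell}$ with $\gamma(c^{\ell+1}_{d_{\ell+1}}) = \tacc$ (one direction chains $\gamma = \tacc$ upward along the existential part of a hop; the other packages an accepting sequence of existential choices into a hop, which is finite because $\ATuring$ is $h$-time bounded). On the right-hand side of \eqref{LastATMClaim}, any $\aword_{\ell+1}$ that does not encode such a hop makes macrostep $\ell+1$ reject by \Cref{claim:turing:macrostep-i-odd-semantics}, hence does not help the existential quantifier, whereas for $\aword_{\ell+1}$ encoding an existential hop $\pi_{\ell+1}$ from $c^\ell_{d_\ell}$ the induction hypothesis applied to $(\pi_1,\dots,\pi_{\ell+1})$ rewrites the tail of \eqref{LastATMClaim} as $\gamma(c^{\ell+1}_{d_{\ell+1}}) = \tacc$. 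Combining, the right-hand side holds iff some such $\pi_{\ell+1}$ has $\gamma(c^{\ell+1}_{d_{\ell+1}}) = \tacc$, iff $\gamma(c^\ell_{d_\ell}) = \tacc$. The even case ($Q_{\ell+1} = \forall$) is dual: $\gamma(c^\ell_{d_\ell}) = \tacc$ for a universal configuration iff every successor, hence every universal hop from it, ends with $\gamma = \tacc$; and the dual fact about macrosteps is that an $\aword_{\ell+1}$ failing to encode a universal hop from $c^\ell_{d_\ell}$ makes macrostep $\ell+1$ accept — exactly the intended treatment of a spurious witness for a universal quantifier.

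The hard part will be the bridge between the single-step $\gamma$-labelling and the ``quantification over encodings of hops'' that $\Turing$ implements: specifically, verifying that (i) a word that fails to encode a legitimate hop rooted at $c^\ell_{d_\ell}$ is correctly discarded — rejected by an $\exists$-macrostep, accepted by a $\forall$-macrostep — so that the quantifier in \eqref{LastATMClaim} effectively ranges over hops only, and (ii) for an existential (resp.\ universal) configuration $c$, $\gamma(c) = \tacc$ is witnessed by some (resp.\ confirmed by every) finite hop whose terminal configuration is itself labelled $\tacc$, so that coarsening the transition relation to hop granularity neither creates nor destroys acceptance. A minor companion point, needed for the base case and the terminal subcases, is the argument from the alternation bound $g$ and the equality $m = g(\card{\aword})$ that a computation path split into $m$ hops has already reached a terminal configuration. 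Once (i) and (ii) are in place, both the base case and the two cases of the inductive step reduce to routine applications of the semantics claims and the induction hypothesis.
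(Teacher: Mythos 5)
Your proposal matches the paper's argument essentially step for step: induction on $m-\ell$, a base case where the alternation bound $m = g(\card{\aword})$ forces $c^m_{d_m}$ to be terminal, and an inductive step split by the parity of $\ell$ that uses the macrostep semantics claims, the treatment of malformed witnesses (rejection at $\exists$-macrosteps, acceptance at $\forall$-macrosteps), and the characterisation of $\gamma$ at hop granularity, with the terminating-path case handled separately just as the paper does. The ``hard part'' you flag (the bridge between single-step $\gamma$ and hops) is asserted at the same level of detail in the paper's own proof, so nothing essential is missing.
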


    \begin{claimproof}
        \let\oldqedsymbol\qedsymbol
        \renewcommand{\claimqedhere}{\renewcommand\qedsymbol{\textcolor{lipicsGray}{\ensuremath{\vartriangleleft}}\oldqedsymbol}%
        \qedhere%
        \renewcommand\qedsymbol{\textcolor{lipicsGray}{\ensuremath{\blacktriangleleft}}}}
        The proof is by induction on $m-\ell$.
        \begin{description}
            \item[base case: $m = \ell$.] 
            Since $\ATuring$ is $g$-alternation bounded and $m = g(\card{\aword})$, in this case $c^m_{d_m}$ is either $\qacc$ of $\qrej$.
            If $\gamma(c^m_{d_m}) = \tacc$, then the state of the last configuration $c_{d_m}^m$
            is the accepting state~$\qacc$, and from~\Cref{claim:turing:macrostep-1-semantics},~\Cref{claim:turing:macrostep-i-odd-semantics}
            and~\Cref{claim:turing:macrostep-i-even-semantics},
            we conclude that $\ATuring(\aword_1,\dots,\aword_m)$ accepts.
            Else, suppose that $\Turing(\aworld_1,\dots,\aworld_m)$ accepts. 
            Then, since ${\aword_1\cdot{\dots}\cdot\aword_m}$ encodes a computation path of $\ATuring$, by~\Cref{claim:turing:macrostep-1-semantics},~\Cref{claim:turing:macrostep-i-odd-semantics}
            and~\Cref{claim:turing:macrostep-i-even-semantics},
            the state of the last configuration $c_{d_m}^m$
            cannot be~$\qrej$.
            Hence, it is $\qacc$, and $\gamma(c^m_{d_m}) = \tacc$. 


            \item[induction step: $m - \ell > 1$.] 
            First of all, if the path $\pi_1\cdot{\dots}\cdot\pi_\ell$ is terminating, then~\eqref{LastATMClaim} follows exactly as in the base case.
            Below, let us assume that $\pi_1\cdot{\dots}\cdot\pi_\ell$ is not terminating. 
            Notice that this means that the length of the computation path is less than $h(\card{\aword})$, since the \ATM $\ATuring$ is $h$-time bounded.
            Below, $\overline{c}$ is short for $c^\ell_{d_\ell}$ and we write~$\overline{\aword}$ for the word from $C$ that encodes $\overline{c}$.
            We split the proof depending on the parity of~$\ell$.
            
            \begin{description}
                \item[case: $\ell$ even.]
                    In this case, $Q_{\ell+1} = \exists$,
                    $\pi_\ell$ is an universal hop, 
                    and $\overline{c}$ contains a state 
                    in~$\States_\exists$. 
                    \ProofRightarrow
                    Suppose $\gamma(\overline{c}) = \tacc$. Since $\overline{c}$ is an existential configuration, this implies that there a computation path $\pi_{\ell+1} = (c_0,\dots,c_d)$ such that $\overline{c} \cdot \pi_{\ell+1}$ is an existential hop, and $\gamma(c_d) = \tacc$. 
                    Let $\aword_{\ell+1}$ be a word encoding $\pi_{\ell+1}$, of minimal length. Since $\ATuring$ is $h$-time bounded, $\aword_{\ell+1} \in \overline{\Alphabet}^{h(\card{\aword})^2}$.
                    By induction hypothesis,
                    \begin{center}
                    $Q_{\ell+2}\aword_{\ell+2} \in \overline{\Alphabet}^{h(\card{\aword})^2},\dots,Q_m \aworld_m \in \overline{\Alphabet}^{h(\card{\aword})^2}$ : 
                    $\Turing(\aworld_1,\dots,\aworld_m)$ accepts.
                    \end{center}
                    Hence, 
                    \begin{center}
                    $\exists \aword_{\ell+1} \in \overline{\Alphabet}^{h(\card{\aword})^2} Q_{\ell+2}\aword_{\ell+2} \in \overline{\Alphabet}^{h(\card{\aword})^2},\dots,Q_m \aworld_m \in \overline{\Alphabet}^{h(\card{\aword})^2}$ : 
                    $\Turing(\aworld_1,\dots,\aworld_m)$ accepts.
                    \end{center}
                    \ProofLeftarrow
                    Conversely, suppose that the right hand side of~\eqref{LastATMClaim} holds. Since $Q_{\ell+1} = \exists$, from~\Cref{claim:turing:macrostep-i-odd-semantics} 
                    there is a word $\aword_{\ell+1} \in \overline{\Alphabet}^{h(\card{\aword})^2}$ 
                    that encodes a computation path $\pi_{\ell+1} = (c_0,\dots,c_d)$ of $\ATuring$, such that $\overline{c} \cdot \pi_{\ell+1}$ is an existential hop and 
                    \begin{center}
                        $Q_{\ell+2}\aword_{\ell+2} \in \overline{\Alphabet}^{h(\card{\aword})^2},\dots,Q_m \aworld_m \in \overline{\Alphabet}^{h(\card{\aword})^2}$ : 
                        $\Turing(\aworld_1,\dots,\aworld_m)$ accepts.
                    \end{center}
                    By induction hypothesis, $\gamma(c_d) = \tacc$, which yields $\gamma(\overline{c}) = \tacc$, by definition of~$\gamma$.
                \item[case: $\ell$ odd.]
                    In this case, $Q_{\ell+1} = \exists$,
                    $\pi_\ell$ is an existential hop, 
                    and $\overline{c}$ contains a state 
                    in~$\States_\forall$. 
                    \ProofRightarrow
                    Suppose $\gamma(\overline{c}) = \tacc$. Since $\overline{c}$ is an universal configuration, $\gamma(c_d) = \tacc$ holds for every computation path $\pi = (c_0,\dots,c_d)$ such that $c^\ell_{d_\ell} \cdot \pi$ is an universal hop.
                    Let $\aword_{\ell+1}$ be a word in $\overline{\Alphabet}^{h(\card{\aword})^2}$. 
                    If $\overline{\aword} \cdot \aword_{\ell+1}$ does not encode an universal hop, then from~\Cref{claim:turing:macrostep-i-even-semantics}, 
                    $\Turing(\aword_1,\dots,\aword_{\ell+1},x_{\ell+2},\dots,x_m)$ accepts for every values of~$x_{\ell+2},\dots,x_m$.
                    If 
                    Suppose instead that $\overline{\aword} \cdot \aword_{\ell+1}$ encodes the universal hop $\pi_{\ell+1} = (c_0,\dots,c_d)$. 
                    From $\gamma(c_d) = \tacc$ and by induction hypothesis,
                    \begin{center}
                    $Q_{\ell+2}\aword_{\ell+2} \in \overline{\Alphabet}^{h(\card{\aword})^2},\dots,Q_m \aworld_m \in \overline{\Alphabet}^{h(\card{\aword})^2}$ : 
                    $\Turing(\aworld_1,\dots,\aworld_m)$ accepts.
                    \end{center}
                    We conclude that
                    \begin{center}
                    $\forall \aword_{\ell+1} \in \overline{\Alphabet}^{h(\card{\aword})^2} Q_{\ell+2}\aword_{\ell+2} \in \overline{\Alphabet}^{h(\card{\aword})^2},\dots,Q_m \aworld_m \in \overline{\Alphabet}^{h(\card{\aword})^2}$ : 
                    $\Turing(\aworld_1,\dots,\aworld_m)$ accepts.
                    \end{center}
                    \ProofLeftarrow
                    Conversely, suppose that the right hand side of~\eqref{LastATMClaim} holds. Since $Q_{\ell+1} = \forall$, this means that for every $\aword_{\ell+1} \in \overline{\Alphabet}^{h(\card{\aword})^2}$ 
                    \begin{center}
                        $Q_{\ell+2}\aword_{\ell+2} \in \overline{\Alphabet}^{h(\card{\aword})^2},\dots,Q_m \aworld_m \in \overline{\Alphabet}^{h(\card{\aword})^2}$ : 
                        $\Turing(\aworld_1,\dots,\aworld_m)$ accepts.
                    \end{center}
                    Since $\overline{c}$ is an universal configuration, in order to conclude that 
                    $\gamma(\overline{c}) = \tacc$ holds, it is sufficient to check that, for every computation path $\pi_{\ell+1} = (c_0,\dots,c_d)$ of $\ATuring$, 
                    if $\overline{c} \cdot \pi_{\ell+1}$ is an universal hop, then $\gamma(c_d) = \tacc$.
                    So, consider a computation path $\pi_{\ell+1} = (c_0,\dots,c_d)$ of $\ATuring$, 
                    such that $\overline{c} \cdot \pi_{\ell+1}$
                    is an universal hop. 
                    Since $\ATuring$ is $h$-time bounded, 
                    there is $\aword_{\ell+1} \in \overline{\Alphabet}^{h(\card{\aword})^2}$ that encodes $\pi_{\ell+1}$. We have, 
                    \begin{center}
                        $Q_{\ell+2}\aword_{\ell+2} \in \overline{\Alphabet}^{h(\card{\aword})^2},\dots,Q_m \aworld_m \in \overline{\Alphabet}^{h(\card{\aword})^2}$ : 
                        $\Turing(\aworld_1,\dots,\aworld_m)$ accepts,
                    \end{center}
                    and by induction hypothesis we obtain $\gamma(c_d) = \tacc$, concluding the proof. 
                    \claimqedhere
            \end{description}
        \end{description}
    \end{claimproof}

The proof of~\Cref{theorem:kaexp-alternation-complexity} stems directly from~\Cref{lemma:almost-theorem-kaexp-alternation}.

\TheoremKAEXPAltProblem*

\begin{proof}
    Fix~$k \geq 1$.
    Since the $k\aexppol$-prenex problem is solvable in $k\aexppol$ directly from its definition,
    we focus on the $k\aexppol$-hardness.
    We aim at a reduction from the membership problem described in~\Cref{proposition:kaexppol-membership}, 
    relying on~\Cref{lemma:almost-theorem-kaexp-alternation}.
    Let~$f,g \colon \Nat \to \Nat$ be two polynomials, and define~$h(\avar) \egdef \tetra(k,f(\avar))$.
    Consider a~$h$-time bounded, $g$-alternation bounded~\ATM~$\ATuring = (\Alphabet,\States_{\exists},\States_{\forall},\qinit,\qacc,\qrej,\trans)$, as well as a word~$\aword \in \Alphabet^*$. 
    We want to check whether $\aword \in \alang(\ATuring)$.
    To simplify the proof without loss of generality, 
    we assume $\card{\aword} \geq 1$ and that~$f(\avar) \geq \avar$ for every $\avar \in \Nat$. 
    Then,~$f(\card{\aword}) \geq \card{\aword} \geq 1$.
    Let $m = \max(f(\card{\aword}),g(\card{\aword}))+3$, 
    and let~$\overline{\Alphabet} \egdef \Alphabet \cup \States \cup \{\qacc,\qrej,\trail\}$.
    We consider the $m$DTM~${\Turing = (m,\overline{\Alphabet},\States',\qinit',\qacc',\qrej',\trans')}$ derived from~$\ATuring$ by applying~\Cref{lemma:almost-theorem-kaexp-alternation}.
    From the property~\eqref{lem:almost-theorem-prop-1} of~$\Turing$, there is a polynomial~$p \colon \Nat \to \Nat$ such that, for every~$(\aword_1,\dots,\aword_m) \in (\overline{\Alphabet}^*)^m$,~$\Turing(\aworld_1,\dots,\aworld_m)$ halts in time $p(\max(\card{\aword_1},\dots,\card{\aword_m}))$. W.l.o.g., we can assume $p(\avar) = \alpha \avar^d + \beta$ for some~${\alpha,d,\beta \in \PNat}$.
    So, for all~$\avar \in \Nat$, 
    $p(\avar) \geq \avar$.
    We rely on~$\Turing$ to build a polynomial instance of 
    the~$k\aexppol$-prenex problem that is satisfied if and only if $\aword \in \alang(\ATuring)$.
    Let~${n \egdef p(2 \cdot m)}$, and consider
    the \nTM~${\Turing' = (n,\overline{\Alphabet},\States',\qinit',\qacc',\qrej',\trans')}$. 
    Notice that $\Turing'$ is defined exactly as~$\Turing$, but has $n \geq m$ tapes, where $n$ is polynomial in~$\card{\aword}$. 
    However, since the two machines share the same transition function~$\trans'$, only the first $m$ tapes are effectively used by~$\Turing'$, and for all~$({\aworld_1,\dots,\aworld_m,\dots,\aworld_n) \in \overline{\Alphabet}}^n$, 
    $\Turing(\aworld_1,\dots,\aworld_m)$ and $\Turing'(\aworld_1,\dots,\aworld_n)$ perform the same computational steps.
    Let $\vec{Q} = (Q_1,\dots,Q_n) \in \{\exists,\forall\}^n$ such that for all $i \in [1,n]$, $Q_i = \exists$ iff $i$~is~odd. We consider the instance of the 
    $k\aexppol$-prenex problem given by~$(n,\vec{Q},\Turing')$. 
    This instance is polynomial in~$\card{\aword}$ and~$\card{\ATuring}$, and asks whether 
    \begin{equation}
        \label{proofT8:eq1}
        Q_1 \aworld_1 \in \overline{\Alphabet}^{\tetra(k,n)}, \dots, Q_n \aworld_n \in \overline{\Alphabet}^{\tetra(k,n)} \text{ : }
        \Turing'(\aworld_1,\dots,\aworld_n) \text{ accepts in time }\tetra(k,n).
    \end{equation}
    To conclude the proof, we show that~\Cref{proofT8:eq1} holds if and only if 
    \begin{equation}
        \label{proofT8:eq2}
        Q_1 \aworld_1 \in \overline{\Alphabet}^{h(\card{\aword})^2},\dots, Q_m \aworld_m \in \overline{\Alphabet}^{h(\card{\aword})^2} \text{ : } 
            \Turing(\aworld_1,\dots,\aworld_m) \text{ accepts}.
    \end{equation}
    Indeed, directly from the property~\eqref{lem:almost-theorem-prop-3} of~$\Turing$,
    this equivalence implies that \Cref{proofT8:eq1} is satisfied if and only if 
    $\aword \in \alang(\ATuring)$, 
    leading to 
    the $k\aexppol$-hardness of the~$k\aexppol$-prenex problem
    directly by~\Cref{proposition:kaexppol-membership}.
    
    By definition of the polynomial $p \colon \Nat \to \Nat$, 
    if~$\Turing(\aworld_1,\dots,\aworld_m)$ accepts, then it does so in at most $p(\max(\card{\aword_1},\dots,\card{\aword_m}))$ steps. Therefore,~\Cref{proofT8:eq2} holds if and only if 
    \begin{equation}
        \label{proofT8:eq3}
        Q_1 \aworld_1 \in \overline{\Alphabet}^{h(\card{\aword})^2},\dots, Q_m \aworld_m \in \overline{\Alphabet}^{h(\card{\aword})^2} \text{ : } 
            \Turing(\aworld_1,\dots,\aworld_m) \text{ accepts in time}~p(h(\card{w})^2).
    \end{equation}
    By~\Cref{lemma:tetra-property-1,lemma:tetra-property-2}, we have
    \begin{equation}
        \label{proofT8:eq4}
    \tetra(k,n) \ = \ \tetra(k,p(2 \cdot m)) \ \geq \ p(\tetra(k,2 \cdot f(\card{\aword}))) \ \geq \ p(\tetra(k,f(\card{\aword})^2)) \ \geq \  p(h(\card{\aword})^2).
    \end{equation}
    This chain of inequalities implies 
    that~\Cref{proofT8:eq3} holds if and only if 
    \begin{equation}
        \label{proofT8:eq5}
        Q_1 \aworld_1 \in \overline{\Alphabet}^{h(\card{\aword})^2},\dots, Q_m \aworld_m \in \overline{\Alphabet}^{h(\card{\aword})^2} \text{ : } 
            \Turing(\aworld_1,\dots,\aworld_m) \text{ accepts in time}~\tetra(k,n),
    \end{equation}
    where, again, the right to left direction holds since, if $\Turing(\aworld_1,\dots,\aworld_m)$ accepts, then it does so in at most~$p(h(\card{w})^2)$ steps.
    Now, from the property~\eqref{lem:almost-theorem-prop-2} of $\Turing$, and again by~\Cref{proofT8:eq4} together with the assumption $p(\avar) \geq \avar$ (for all~$\avar \in \Nat$), 
    we can extend the bounds on the words $\aword_1,\dots,\aword_m$ we consider, and
    conclude that~\Cref{proofT8:eq5} is equivalent to
    \begin{equation}
        \label{proofT8:eq6}
        Q_1 \aworld_1 \in \overline{\Alphabet}^{\tetra(k,n)},\dots, Q_m \aworld_m \in \overline{\Alphabet}^{\tetra(k,n)} \text{ : } 
            \Turing(\aworld_1,\dots,\aworld_m) \text{ accepts in time}~\tetra(k,n).
    \end{equation}
    Lastly, since for all~$({\aworld_1,\dots,\aworld_m,\dots,\aworld_n) \in \overline{\Alphabet}}^n$, 
    $\Turing(\aworld_1,\dots,\aworld_m)$ and $\Turing'(\aworld_1,\dots,\aworld_n)$ perform the same computational steps,
    we conclude that~\Cref{proofT8:eq6} holds iff~\Cref{proofT8:eq1} holds.
\end{proof}

\CorollarySIGMAKProblem* 

\begin{proof}(\textit{sketch})
    Let~$f \colon \Nat \to \Nat$ be a polynomials, 
    and define~$h(\avar) \egdef \tetra(k,f(\avar))$.
    Consider a~$h$-time $j$-alternation bounded~\ATM~$\ATuring$ (i.e.~$\ATuring$ alternates between existential and universal states at most $j \in \PNat$ times), as well as a word~$\aword \in \Alphabet^*$. 
    We want to check whether $\aword \in \alang(\ATuring)$.
    Let $m = \max(f(\card{\aword}),j)+3$.
    One can easily 
    revisit~\Cref{lemma:almost-theorem-kaexp-alternation} with respect to~$\ATuring$ as above,
    so that the property~\eqref{lem:almost-theorem-prop-3} of the $m\TM$ $\Turing$ is updated as follows.
    \begin{center}
    $\aword \in \alang(\ATuring)$ \ iff \
    $Q_1 \aworld_1 \in \overline{\Alphabet}^{h(\card{\aword})^2},\dots, Q_m \aworld_m \in \overline{\Alphabet}^{h(\card{\aword})^2}$ : 
    $\Turing(\aworld_1,\dots,\aworld_m)$ accepts,
    \end{center}
where $Q_1 = \exists$, for every $i \in [1,j]$, $Q_{i} \neq Q_{i+1}$, and for every $i \in [j+1,m]$, $Q_{i} = Q_{j+1}$. In particular, now $\vec{Q}= (Q_1,\dots,Q_m)$ is such that $\altern{\vec{Q}} = j$.
The proof then carries out analogously to the proof of~\Cref{theorem:kaexp-alternation-complexity}, the only difference being that for the \nTM $\Turing'$, we consider the quantifier prefix $(Q_1,\dots,Q_n)$ such that, for every $i \in [m+1,n]$, $Q_i = Q_m$, in order to keep the number of alternations bounded by $j$.
\end{proof}

\bibliography{bibliography}

\appendix

\end{document}